\shorttitle{Convergence of hybrid slice sampling} 
\newcommand{\diam}{{\rm  diam}}
\newcommand{\vol}{{\rm  vol}}
\newcommand{\Borel}{\mathcal{B}}
\newcommand{\N}{\mathbb{N}}
\newcommand{\R}{\mathbb{R}}
\newcommand{\abs}[1]{\left\vert #1 \right\vert}	% absolute value / norm
\newcommand{\norm}[1]{\left\Vert #1 \right\Vert}	% absolute value / norm
\newcommand{\scalar}[2]{\langle #1 , #2 \rangle}
\newcommand{\set}[1]{\lbrace #1 \rbrace}
\renewcommand{\a}{\alpha}
\newcommand{\eps}{\varepsilon}
\renewcommand{\rho}{\varrho}
\newcommand{\dint}{\text{\rm d}}
\newcommand{\spec}{{\rm spec}}
\newcommand{\gap}{{\rm {g}ap}}
\newtheorem{alg}{Algorithm}
\newenvironment{algorithm}{\begin{alg}}{\end{alg}}
\begin{document}%\recd{}{}%Do not alter this line.

\title{Convergence of hybrid slice sampling\\ via spectral gap\\} 
% insert title - use \\ if it requires more than one line.

\authorone[University Warwick]{Krzysztof \L atuszy\'{n}ski}{}{}
\addressone{Department of Statistics, CV47AL Coventry, United Kingdom} % Your postal address goes here.
\emailone{K.G.Latuszynski@warwick.ac.uk} %Authors email goes here.
\vspace*{-5ex}
\authortwo[Universit\"at Passau]{Daniel Rudolf}{}{}
\addresstwo{Universit\"at Passau, Innstra\ss e 33, 94032 Passau, Germany}
\emailtwo{daniel.rudolf@uni-passau.de}

\begin{abstract}	
	It is known that the simple slice sampler has robust convergence properties, however the class of problems where it can be implemented is limited. 
	In contrast, we consider hybrid slice samplers which are easily implementable and where another Markov chain approximately samples the uniform distribution on each slice. 
	Under appropriate assumptions on the Markov chain on the slice we show a lower bound and an upper bound of the spectral gap of the hybrid slice sampler in terms of the spectral gap of the simple slice sampler. 
	An immediate consequence of this is that spectral gap and geometric ergodicity of the hybrid slice sampler can be concluded from spectral gap and geometric ergodicity of its simple version which is very well understood.
	These results indicate that robustness properties of the simple slice sampler are inherited by (appropriately designed) easily implementable hybrid versions. 
	We apply the developed theory and analyse a number of specific algorithms such as the stepping-out shrinkage slice sampling, hit-and-run slice sampling on a class of multivariate targets and an easily implementable combination of both procedures on multidimensional bimodal densities.	
\end{abstract}

\keywords{slice sampler; spectral gap; geometric ergodicity}%insert keywords separated by a semicolon. You should avoid including keywords which also appear in the title.

\ams{60J22}{65C05;60J05}% insert the primary 2020 Maths Subject Classification number in the first bracket
		% and the secondary ams number(s) in the second bracket
		% e.g. \ams{60E20}{49G03;49F10}
		%Maximum of three in each, ideally one or two in each primary and secondary.
		%codes found here ``https://mathscinet.ams.org/msnhtml/msc2020.pdf''

\section{Introduction}
Slice sampling algorithms are designed for 
Markov chain Monte Carlo
(MCMC) 
sampling from a
distribution given by a possibly unnormalised density. 
They belong
to the class of auxiliary variable algorithms that define a suitable Markov
chain on an extended state space. Following \cite{wang1987nonuniversal} and
\cite{edwards1988generalization} a number of different versions have been
discussed and proposed in
\cite{besag1993spatial,Hi98,mira2001perfect,MiTi02,MuAdMa10,Ne93,RoRo99,RoRo02,pmlr-v202-schar23a}.
We refer to these papers for details of algorithmic design and applications in
Bayesian inference and statistical physics. Here let us first focus on  the appealing \emph{simple slice
	sampler} setting in which no further algorithmic tuning or design by
the user is necessary: Assume that $K\subseteq \R^d$ and let the
unnormalised density be
$\rho\colon K \to (0,\infty)$. 
The goal is to sample approximately with respect to (w.r.t.) the distribution $\pi$ determined by $\rho$, i.e.
\[
\pi(A)=\frac{\int_A \rho(x)\,\dint x}{\int_K \rho(x)\,\dint x},\qquad A\in\Borel(K),
\]
where $\Borel(K)$ denotes the Borel $\sigma$-algebra.
Given the current state $X_n = x\in K$, the \emph{simple slice
	sampling} algorithm generates the next Markov chain instance $X_{n+1}$ by the following two steps:
\begin{enumerate}
	\item choose $t$ uniformly at random from $(0,\rho(x))$, 
	i.e. $t\sim \mathcal{U}(0,\rho(x))$;
	\item choose $X_{t+1}$ uniformly at random from 
	\[
	K(t):= \{x \in K \mid
	\rho(x) > t\},
	\]
	the level set of
	$\rho$ determined by $t$.
\end{enumerate}
The 
above defined \emph{simple slice sampler} transition mechanism is known to be reversible w.r.t. $\pi$ and
possesses very robust convergence properties that have been observed
empirically and established formally. 
For example Mira and Tierney \cite{MiTi02} proved that: If $\rho$ is bounded and the support of $\rho$
has finite Lebesgue measure, then the simple slice sampler is \emph{uniformly ergodic}.
Roberts and Rosenthal provide in \cite{RoRo99} criteria 
for \emph{geometric ergodicity}. 
Moreover, in \cite{RoRo99,RoRo02} the authors prove
explicit estimates of the total variation distance of the distribution of $X_n$ to $\pi$. In the recent work \cite{natarovskii2021quantitative}, depending on the volume of the level sets, an explicit lower bound of the \emph{spectral gap} of simple slice sampling is derived.

Unfortunately, the applicability of the simple slice sampler is
limited. In high dimensions sampling uniformly from the
level set of $\rho$ is in general infeasible and thus the second step
of the algorithm above can not be performed. Consequently, the second
step is replaced by sampling a Markov chain on the level set, which has the uniform distribution as the invariant one. Following the terminology of
\cite{RoRo97} we call such algorithms \emph{hybrid slice
	samplers}. 
We refer to  \cite{Ne03} where various procedures and designs for the
Markov chain on the slice are
suggested and insightful expert advice is given.

Although being easy to implement, \emph{hybrid slice sampling} in general has not been 
analyzed theoretically and little\footnote{A notable exception is elliptical slice sampling \cite{MuAdMa10}, that has been recently investigated in \cite{natarovskii2021geometric}, where a geometric ergodicity statement is provided.} is known about its convergence properties. The
present paper is aimed at closing this gap by providing statements about the inheritance of convergence from the simple to the hybrid setting.

To this end we study the 
absolute spectral gap of hybrid slice
samplers. The 
absolute
spectral gap of a Markov operator $P$ or a corresponding Markov 
chain $(X_n)_{n\in \N}$ is given by
\[
\gap(P)=1-\norm{P}_{L^0_{2,\pi} \to L^0_{2,\pi}},
\]
where $L^0_{2,\pi}$ is the space of functions $f\colon K \to \R$ with zero mean 
and finite variance (i.e. $\int_K f(x) \dint
\pi(x) = 0;$ $\norm{f}_{2}^2 = \int_K \abs{f(x)}^2 \dint
\pi(x)<\infty$) and 
$\norm{P}_{L^0_{2,\pi} \to L^0_{2,\pi}}$ denotes the
operator norm. We refer to \cite{Ru91} for the functional analytic
background. From the computational point of view, existence of the
spectral gap (i.e. $\gap(P)>0$) implies a number of desirable
and well studied robustness properties. In particular 
\begin{itemize}
	\item the spectral gap implies geometric ergodicity
	\cite{kontoyiannis2009geometric,RoRo97} and the variance bounding property \cite{RoRo08};
	\item for reversible
	Markov chains the spectral gap implies that a CLT holds for all functions $f \in
	L_{2,\pi}$, c.f.  \cite{Ge92,kipnis1986central}; 
	\item furthermore, consistent estimation of the CLT asymptotic
	variance is well established for geometrically ergodic chains
	(c.f. \cite{bednorz2007few, flegal2010batch, hobert2002applicability, JoHaCaNe06}).
\end{itemize}

Additionally, quantitative information on the spectral gap allows the formulation of 
precise non-asymptotic statements.
In particular, it is well known, see e.g. \cite[Lemma~2]{NoRu14},
that if $\nu$ is the initial distribution of the reversible Markov chain in
question, i.e. $\nu=\mathbb{P}_{X_1},$ then
\[
\norm{\nu P^n - \pi}_{\mbox{tv}} \leq (1-\gap(P))^n \norm{\frac{d\nu}{d \pi}-1}_{2},
\]
where $\nu P^n = \mathbb{P}_{X_{n+1}}$. See \cite[Section~6]{Ba05} for a
related $L_{2,\pi}$ convergence result. Moreover, when considering the
sample average, one obtains 
\[
\mathbb{E} \abs{ \frac{1}{n} \sum_{j=1}^n f(X_j) - \int_K f(x) \dint\pi(x)}^2 
\leq \frac{2}{n\cdot \gap(P)} 
+ \frac{c_p\norm{\frac{d\nu}{d\pi}-1}_{\infty}}{n^2 \cdot \gap(P)},
\]
for any $p>2$ and any 
function 
$f \colon K \to \R$ with
$\norm{f}_{p}^p 
= \int_K \abs{f(x)}^p  \pi(\dint x) \leq 1$, 
where $c_p$ is an explicit
constant
which depends only on $p$. One can also take a burn-in into account, 
for further details see \cite[Theorem~3.41]{Ru12}.
This indicates that the spectral gap of a Markov chain is
central to robustness and a crucial quantity in both asymptotic and non-asymptotic analysis of MCMC estimators.

The route we endeavour is to conclude the spectral gap of the hybrid
slice sampler from the more tractable spectral gap of the simple slice sampler. 
So what is known about the spectral gap of the simple slice sampler?
For saying more on this we require the following notation. Define $v_\rho \colon [0,\infty) \to [0,\infty]$ by $v_\rho(t):= \vol_d(K(t))$, which for level $t$ returns the volume of the level set. We say for $m\in\mathbb{N}$ that $v_\rho \in \Lambda_m$ if 
\begin{itemize}
	\item $v_\rho$ is continuously differentiable and $v_\rho'(t)<0$ for any $t\geq0$; and
	\item the mapping $t\mapsto t v_\rho'(t)/v_\rho(t)^{1-1/m}$ is decreasing on the support of $v_\rho$.
\end{itemize}
Recently, in \cite[Theorem~3.10]{natarovskii2021quantitative} it has been shown that, if $v_\rho \in \Lambda_m$, then $\gap(U)\geq 1/(m+1)$. This provides a criterion for the existence of a spectral gap as well as a quantitative lower bound, essentially depending on whether $t\mapsto t v_\rho'(t)/v_\rho(t)^{1-1/m}$ is decreasing or not.

Now we are in a position to explain our contributions.
Let $H$ be the Markov kernel
of the hybrid slice sampler determined by a family of transition
kernels $H_t$, where each $H_t$ is a Markov kernel with uniform limit distribution, say $U_t$,
on the level determined by $t$.
Consider 
\[
\beta_k := \sup_{x\in K} \left(
\int_0^{\rho(x)}\norm{H^k_t-U_t}_{L_{2,t}\to L_{2,t}}^2 \,\frac{\dint t}{\rho(x)}\right)^{1/2},
\]
and note that the quantity $\norm{H^k_t-U_t}_{L_{2,t}\to L_{2,t}}^2$ measures
how fast $H_t$ gets close to $U_t$. Thus
$\beta_k$ 
is the supremum over expectations of a function which measures the speed of convergence
of $H_t^k$ to $U_t$. The main result is stated 
in Theorem~\ref{thm: low_upp_spec} and it is as follows:
Assume that $\beta_k \to 0$ for increasing $k$ and assume $H_t$ 
induces a positive semi-definite Markov operator for every level $t$.
Then
\begin{equation}  \label{eq: ineq}
\frac{\gap(U)-\beta_k}{k} \leq \gap(H) \leq \gap(U), \quad k\in\N . 
\end{equation}
The first inequality implies that whenever there exists a spectral gap 
of the simple slice sampler and $\beta_k \to 0$, 
then there is a spectral of the hybrid slice sampler.
The second inequality of \eqref{eq: ineq} verifies a very intuitive
result, namely that the simple slice sampler is always better than the hybrid one.

We demonstrate how to apply our main theorem in different settings.
First, we consider a stepping-out shrinkage slice sampler, suggested in \cite{Ne03}, in 
a simple bimodal $1$-dimensional setting. Next we turn to the
$d$-dimensional case and on each slice perform a single step of the hit-and-run algorithm, studied in
\cite{BeRoSm93,Lo99,Sm84}. 
Using our main theorem we prove
equivalence of the spectral gap (and hence geometric ergodicity) of
this hybrid hit-and-run on the slice and the simple slice sampler.
Let us also mention here that in \cite{RuUl15} 
the hit-and-run algorithm, hybrid hit-and-run on the slice 
and simple slice sampler are compared, according to covariance ordering \cite{Mi01}, 
to a random walk Metropolis algorithm.
Finally, we combine the stepping-out shrinkage and hit-and-run slice
sampler. The resulting algorithm is practical and easily implementable
in multidimensional settings. For this version we again show
equivalence of the spectral gap and geometric ergodicity with the
simple slice sampler for multidimensional bimodal
targets. 

Further note that we consider single auxiliary variable 
methods to keep the arguments simple. We believe that
a similar analysis
can also be done if one considers multi auxiliary variable methods.

The structure of the paper is as follows. In Section~\ref{sec: basic}
the notation and preliminary results are provided. These include a necessary and sufficient condition
for reversibility of hybrid slice sampling in Lemma~\ref{lem: crit_rev} followed by a useful representation of slice samplers
in Section~\ref{subsec:useful},
which is crucial in the proof of the main result.
In Section~\ref{sec: spectral_gap} we state and prove the main result.
For example in Corollary~\ref{thm: upp_est_op_norm} a lower bound of the spectral
gap of a hybrid slice sampler is provided which performs several steps w.r.t.
$H_t$ on the chosen level.
In Section~\ref{sec: example}
we apply our result to analyse a number of specific hybrid slice
sampling algorithms in
different settings that include multidimensional bimodal distributions.

\section{Notation and basics} \label{sec: basic}

Recall that $\rho: K  \to (0, \infty)$ is an unnormalised density on
$K \subseteq \R^d$ and denote the level set of $\rho$ as
\[
K(t)=\set{ x\in K\mid \rho(x)> t }.
\]
Hence the sequence $(K(t))_{t\geq 0}$ of subsets of $\R^d$ satisfies
\begin{enumerate}
	\item $K(0) = K$;
	\item $K(s) \subseteq K(t)$ for $t<s$;
	\item $K(t) = \emptyset$ for $t\geq\norm{\rho}_{\infty}$.
\end{enumerate}
Let $\vol_d$ be the $d$-dimensional Lebesgue measure  and let
$(U_t)_{t\in (0,\Vert \rho \Vert_\infty)}$ be a sequence of distributions, where $U_t$ is the uniform distribution on $K(t)$, i.e.
\[
U_t(A)=\frac{\vol_d(A\cap K(t))}{\vol_d(K(t))}, \quad A\in \Borel(K).
\]
Further let $(H_t)_{t\in (0,\Vert \rho \Vert_\infty)}$ be a sequence of transition kernels, where  $H_t$ is a transition kernel on $K(t)\subseteq \R^d$.
For convenience we extend the definition of the transition kernel $H_{t}(\cdot,\cdot)$ on the measurable space $(K,\Borel(K))$.
We set 
\begin{align}  \label{al: path_case}
\bar{H}_t(x,A)& = \begin{cases}
0 &  x\not\in K(t),\\
H_t(x,A\cap K(t)) & x\in K(t). 
\end{cases}
\end{align}
In the following we write $H_t$ for $\bar{H}_t$ and consider $H_t$ as extension on $(K,\Borel(K))$.
The transition kernel of the hybrid slice sampler is given by
\[
H(x,A) = \frac{1}{\rho(x)}\int_0^{\rho(x)} H_t(x,A)\, \dint t, \quad x\in K,\, A\in \Borel(K).
\]
If $H_t=U_t$ we have the simple slice sampler studied in \cite{MiTi02,natarovskii2021quantitative,RoRo99,RoRo02}. 
The transition kernel of this important special case is given by
\[
U(x,A) = \frac{1}{\rho(x)}\int_0^{\rho(x)} U_t(A)\, \dint t, \quad x\in K,\, A\in \Borel(K).
\]
We provide a criterion for reversibility of $H$ w.r.t.
$\pi$. Therefore let us define the density 
\[
\ell(s) = \frac{\vol_d(K(s))}{\int_0^{\Vert \rho \Vert_\infty} \vol_d(K(r))\,\dint r},\quad s\in(0,\Vert \rho \Vert_\infty),
\]
of the distribution of the level sets on $((0,\Vert \rho \Vert_\infty),\Borel((0,\Vert \rho \Vert_\infty))).$

\begin{lemma}  \label{lem: crit_rev}
	The transition kernel $H$ is reversible w.r.t. $\pi$ iff 
	\begin{equation}	\label{eq: rev_equi}
	\int_0^{\Vert \rho \Vert_\infty} \int_B H_t(x,A)\, U_t(\dint x)\, \ell(t)\dint t
	= \int_0^{\Vert \rho \Vert_\infty} \int_A H_t(x,B)\, U_t(\dint x)\, \ell(t)\dint t, \quad A,B \in \Borel(K).
	\end{equation}
	In particular, if $H_t$ is reversible w.r.t. $U_t$ for almost all $t$ (concerning $\ell$), 
	then $H$ is reversible w.r.t. $\pi$.  
\end{lemma}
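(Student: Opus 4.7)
The plan is to rewrite the left-hand side of the detailed balance relation $\int_B H(x,A)\,\pi(\dint x)=\int_A H(x,B)\,\pi(\dint x)$ as a double integral over $t$ and $x$, and check that the resulting expression coincides with the left-hand side of \eqref{eq: rev_equi}. By symmetry the right-hand side will transform analogously, so both characterizations will be seen to be identical.

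Concretely, first I would substitute the definitions: $\pi(\dint x)=\rho(x)\,\dint x/Z$ with $Z=\int_K \rho(x)\,\dint x$, and $H(x,A)=\frac{1}{\rho(x)}\int_0^{\rho(x)}H_t(x,A)\,\dint t$. The factor $1/\rho(x)$ cancels against $\rho(x)$ from $\pi$, leaving
\[
\int_B H(x,A)\,\pi(\dint x)=\frac{1}{Z}\int_B\int_0^{\rho(x)} H_t(x,A)\,\dint t\,\dint x .
\]
Next I would apply Fubini and use that the condition $0<t<\rho(x)$ is equivalent to $x\in K(t)$, so the inner region becomes $B\cap K(t)$. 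Since $H_t$ was extended to vanish off $K(t)$, the indicator of $K(t)$ may be dropped once we switch to $U_t$. Writing $\dint x= \vol_d(K(t))\,U_t(\dint x)$ on $K(t)$ gives
\[
\int_B H(x,A)\,\pi(\dint x)=\frac{1}{Z}\int_0^\infty \vol_d(K(t))\int_B H_t(x,A)\,U_t(\dint x)\,\dint t .
\]

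The key identity I would invoke is the layer-cake representation $\int_K \rho(x)\,\dint x=\int_0^\infty \vol_d(K(s))\,\dint s$, which yields that the normalising constant $\int_0^\infty \vol_d(K(s))\,\dint s$ in the definition of $\ell$ equals $Z$. Therefore $\vol_d(K(t))/Z=\ell(t)$ and the last display becomes $\int_0^\infty\int_B H_t(x,A)\,U_t(\dint x)\,\ell(t)\,\dint t$, which is exactly the left-hand side of \eqref{eq: rev_equi}. Swapping the roles of $A$ and $B$ gives the right-hand side, establishing the claimed equivalence.

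For the ``in particular'' statement, I would note that if for $\ell$-a.e. $t$ the kernel $H_t$ is reversible with respect to $U_t$, then $\int_B H_t(x,A)\,U_t(\dint x)=\int_A H_t(x,B)\,U_t(\dint x)$ for those $t$, and integrating against $\ell(t)\,\dint t$ produces \eqref{eq: rev_equi}. The main (and only) technical point is the bookkeeping between the unnormalised measure $\dint x$, the slice-restricted uniform measure $U_t$, and $\ell$; everything else is Fubini. No genuine obstacle is expected, since the extension convention $H_t(x,\cdot)=0$ for $x\notin K(t)$ removes any concern about whether $H_t(x,A)$ is defined when the slice does not contain $x$.
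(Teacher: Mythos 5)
Your proposal is correct and follows essentially the same route as the paper: both establish the identity $\int_B H(x,A)\,\pi(\dint x)=\int_0^\infty\int_B H_t(x,A)\,U_t(\dint x)\,\ell(t)\,\dint t$ via the layer-cake identity $\int_K\rho(x)\,\dint x=\int_0^\infty\vol_d(K(s))\,\dint s$ and Fubini, from which the equivalence is immediate, and then obtain the ``in particular'' claim by integrating the slicewise detailed balance against $\ell(t)\,\dint t$. Your write-up merely spells out the Fubini/indicator bookkeeping that the paper leaves implicit.
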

Equation \eqref{eq: rev_equi} is the detailed balance condition of $H_t$ w.r.t. $U_t$ in average sense,
i.e.
\[
\mathbb{E}_\ell [H_\cdot(x,\dint y) U_\cdot(\dint x)] = \mathbb{E}_\ell [H_\cdot(y,\dint x) U_\cdot(\dint y)],\quad x,y\in K.
\]
Now we prove Lemma~\ref{lem: crit_rev}.
\begin{proof}
	First, note that
	\begin{align*}
	\int_K \rho(x)\, \dint x & = \int_0^{\Vert \rho \Vert_\infty} \int_K \mathbf{1}_{(0,\rho(x))} (s)  \, \dint x\, \dint s \\
	& = \int_0^{\Vert \rho \Vert_\infty} \int_K \mathbf{1}_{K(s)} (x) \, \dint x\, \dint s 
	= \int_0^{\Vert \rho \Vert_\infty} \vol_d({K(s)}) \, \dint s.
	\end{align*}
	%   Furthermore, for any $A,B\in \Borel(K)$ holds
	By this, we obtain for any $A,B\in \Borel(K)$ that
	\begin{align*}
	& \int_B H(x,A)\, \pi(\dint x) = \int_B \int_0^{\rho(x)} H_t(x,A) \frac{\dint t}{\int_0^{\Vert \rho \Vert_\infty} \vol_d(K(s))\dint s} \dint x\\
	& = \int_B \int_0^{\Vert \rho \Vert_\infty} 
	\mathbf{1}_{K(t)}(x)H_t(x,A) \frac{\ell(t)}{\vol_d(K(t))} \dint t\dint x
	=
	\int_0^{\Vert \rho \Vert_{\infty}} \int_B H_t(x,A)\, U_t(\dint x)\,\ell(t) \dint t.
	\end{align*}
	As an immediate consequence from the previous equation we have the claimed equivalence
	of reversibility and \eqref{eq: rev_equi}.
	By the definition of the reversibility of $H_t$ according to $U_t$ holds
	\[
	\int_B H_t(x,A)\, U_t(\dint x) = \int_A H_t(x,B)\, U_t(\dint x).
	\]
	This, combined with \eqref{eq: rev_equi}, leads to the reversibility of $H$.
\end{proof}

We always want to have that $H$ is reversible w.r.t. $\pi$.
Therefore we formulate the following assumption.
{\assumption
	Let $H_t$ be reversible w.r.t. $U_t$ for any $t\in(0,\Vert \rho \Vert_{\infty})$.
}

Now we define Hilbert spaces of square integrable functions and Markov operators.
Let $L_{2,\pi}=L_2(K,\pi)$ be the space of
functions $f\colon K \to \R$ which satisfy
$\norm{f}_{2,\pi}^2 := \langle f,f \rangle_{\pi} <\infty$, where 
\[
\langle f,g \rangle_\pi :=\int_K f(x)\, g(x)\, \pi(\dint x)
\]
denotes the corresponding inner-product of $f,g \in L_{2,\pi}$. 
For $f\in L_{2,\pi}$ and $t\in(0,\Vert \rho \Vert_\infty)$ define
\begin{equation}  \label{eq: MK_on_level}
H_t f(x) = \int_{K(t)} f(y)\, H_t(x,\dint y), \qquad x\in K.
\end{equation}
Note that, if $x\not \in K(t)$ we have $H_t f(x) = 0$ by the convention on $H_t$, 
see \eqref{al: path_case}.
The Markov operator $H\colon L_{2,\pi}\to L_{2,\pi}$ is defined by
\[
Hf(x)=\frac{1}{\rho(x)}\int_0^{\rho(x)} H_t f(x)\, \dint t,
% =\frac{1}{\rho(x)}\int_0^{\rho(x)} \int_{K(t)} f(y)\, H_t(x,\dint y)\, \dint t,
\]
and similarly $U\colon L_{2,\pi}\to L_{2,\pi}$ by
\[
U f(x)=\frac{1}{\rho(x)}\int_0^{\rho(x)} U_t(f)\, \dint t,
%       =\frac{1}{\rho(x)}\int_0^{\rho(x)} \int_{K(t)} f(y)\, U_t(\dint y)\, \dint t.
\]
where $U_t(f)=\int_{K(t)} f(x)\,U_t(\dint x)$ is a special case of \eqref{eq: MK_on_level}.
Further, for $t \in(0,\Vert \rho \Vert_\infty)$ let $L_{2,t}=L_2(K(t),U_t)$ be the space of
functions $f\colon K(t) \to \R$ with
$\norm{f}_{2,t}^2 := \langle f,f \rangle_{t} <\infty$, where
\[
\langle f,g \rangle_{t} := \int_{K(t)} f(x)\, g(x)\, U_t(\dint x)
\]
denotes the corresponding inner-product of $f,g\in L_{2,t}$.
Then, $H_t\colon L_{2,t} \to L_{2,t}$ can also be considered as Markov operator.
Define the functional
\[
S(f) = \int_K f(x)\, \pi(\dint x),\quad f\in L_{2,\pi},
\]
as operator $S\colon L_{2,\pi} \to L_{2,\pi}$ 
which maps functions to constant functions, given by their mean value.
We say $f\in L_{2,\pi}^0$ iff $f\in L_{2,\pi}$ and $S(f)=0$. 
Now the absolute spectral gap 
of a Markov kernel or Markov operator $P \colon L_{2,\pi} \to L_{2,\pi}$
is given by
\[
\gap(P) = 1 - \norm{P-S}_{L_{2,\pi} \to L_{2,\pi}} = 1- \norm{P}_{L^0_{2,\pi}\to L^0_{2,\pi}} .
\]
For details of the last equality we refer to \cite[Lemma~3.16]{Ru12}. Moreover, for the equivalence of $\gap(P)>0$ and (almost sure) geometric ergodicity we refer to \cite[Proposition~1.2]{kontoyiannis2009geometric}.
For any $t>0$ the norm $\norm{f}_{2,t}$ can also be 
considered for $f\colon K \to \mathbb{R}$. With 
this in mind we have the following relation between $\norm{f}_{2,\pi}$ and $\norm{f}_{2,t}$.
\begin{lemma} 
	For any $f\colon K\to \mathbb{R}$,
	with the notation from above, we obtain
	\begin{equation}
	\label{eq: sol_op}
	S(f) =\int_0^{\Vert \rho \Vert_\infty} U_t(f)\, \,\ell(t)\,\dint t.
	\end{equation}
	In particular, 
	\begin{align}
	\label{eq: norm}
	\norm{f}_{2,\pi}^2 & = \int_0^{\Vert \rho \Vert_\infty} \norm{f}_{2,t}^2 \,\ell(t)\,\dint t.
	\end{align}
\end{lemma}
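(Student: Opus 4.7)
The plan is to treat both identities as a single application of a layer-cake / Fubini argument, using the observation already exploited in the proof of Lemma~\ref{lem: crit_rev}, namely that
\[
\rho(x)=\int_0^{\infty} \mathbf{1}_{[0,\rho(x)]}(t)\,\dint t = \int_0^\infty \mathbf{1}_{K(t)}(x)\,\dint t.
\]
This lets me rewrite any integral of the form $\int_K g(x)\rho(x)\,\dint x$ as a double integral over level sets, which is exactly the structure appearing in the definitions of $\pi$, $U_t$, and $\ell$.

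First I would prove \eqref{eq: sol_op}. By definition of $\pi$ and the above representation of $\rho$,
\[
S(f)=\int_K f(x)\,\pi(\dint x)=\frac{1}{\int_K \rho(y)\,\dint y}\int_K f(x)\int_0^\infty \mathbf{1}_{K(t)}(x)\,\dint t\,\dint x.
\]
Applying Fubini to exchange the order of integration (which is justified first for $f\ge 0$ by Tonelli and then extended to $L_{2,\pi}$ by linearity and integrability), the inner integral over $K$ becomes $\int_{K(t)} f(x)\,\dint x = \vol_d(K(t))\,U_t(f)$. Combined with the computation $\int_K \rho(y)\,\dint y=\int_0^\infty \vol_d(K(s))\,\dint s$ already recorded in the proof of Lemma~\ref{lem: crit_rev}, the normalising factor turns the $\vol_d(K(t))$ into $\ell(t)$, giving \eqref{eq: sol_op}.

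The norm identity \eqref{eq: norm} is then immediate: for $f\in L_{2,\pi}$, apply \eqref{eq: sol_op} to the nonnegative function $|f|^2$, observing that $U_t(|f|^2)=\norm{f}_{2,t}^2$ by the definition of the $L_{2,t}$ norm. The only mild subtlety I expect is making sure the restriction of $f$ to $K(t)$ is well defined as an element of $L_{2,t}$ for $\ell$-almost every $t$; this follows by the same Fubini argument applied to $|f|^2$, which shows $\int_0^\infty \norm{f}_{2,t}^2 \,\ell(t)\,\dint t=\norm{f}_{2,\pi}^2<\infty$ and hence $\norm{f}_{2,t}<\infty$ for $\ell$-a.e.\ $t$. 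There is no real obstacle here beyond careful bookkeeping of the normalising constants.
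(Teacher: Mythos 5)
Your proposal is correct and follows essentially the same route as the paper: the layer-cake identity $\rho(x)=\int_0^\infty \mathbf{1}_{K(t)}(x)\,\dint t$ together with Fubini turns $\int_K f(x)\rho(x)\,\dint x$ into $\int_0^\infty \vol_d(K(t))\,U_t(f)\,\dint t$, the normalisation produces $\ell(t)$, and \eqref{eq: norm} is obtained as the special case $S(\abs{f}^2)=\norm{f}_{2,\pi}^2$. The additional remarks on Tonelli and on $\norm{f}_{2,t}<\infty$ for $\ell$-a.e.\ $t$ are sound but just make explicit what the paper leaves implicit.
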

\begin{proof}
	The assertion of \eqref{eq: norm} is a special case of \eqref{eq: sol_op}, since $S(\abs{f}^2)=\norm{f}_{2,\pi}^2$.
	By $\int_K \rho(x)\,\dint x = \int_0^{\Vert \rho \Vert_\infty} \vol_d(K(s))\, \dint s$, see in the proof of Lemma~\ref{lem: crit_rev}, 
	one obtains 
	\begin{align*}
	S(f)& 
	= \frac{\int_K f(x)\,\rho(x)\,\dint x}{\int_0^{{\Vert \rho \Vert_\infty}} \vol_d(K(s))\, \dint s}
	= \int_K \int_0^{\rho(x)} f(x) \frac{\dint t\; \dint x}{\int_0^{\Vert \rho \Vert_\infty} \vol_d(K(s))\, \dint s}
	\\
	&= \int_0^{{\Vert \rho \Vert_\infty}} \int_{K(t)} f(x)  \, \frac{\dint x}{\vol_d(K(t))}\,\ell(t) \dint t 
	= \int_0^{{\Vert \rho \Vert_\infty}} U_t(f)\,\ell(t)\, \dint t,
	\end{align*}
	which proves \eqref{eq: sol_op}.
\end{proof}

\subsection{A useful representation} \label{subsec:useful}
As in \cite[Section~3.3]{RuUl13} we derive a suitable representation 
of $H$ and $U$.
We define a $d+1$-dimensional auxiliary state space.
Let 
\[
K_\rho=\set{  (x,t)\in\R^{d+1}\mid x\in K,\, t\in(0,\rho(x))  }
\]
and let $\mu$ be the uniform distribution on $(K_\rho,\Borel(K_\rho))$, i.e.
\[
\mu(\dint(x,t)) = \frac{\dint t\, \dint x}{\vol_{d+1}(K_\rho)}.
\]
Note that $\vol_{d+1}(K_\rho) = \int_K \rho(x)\, \dint x$.
By 	$L_{2,\mu}=L_2(K_\rho,\mu)$ we denote the space of functions $f\colon K_\rho \to \R$ that satisfy $\norm{f}_{2,\mu}^2 := \langle f,f \rangle_\mu <\infty$, where 
\[
\langle f,g \rangle_\mu := \int_{K_\rho} f(x,s)\, g(x,s)\, \mu(\dint(x,s))
\]
denotes the corresponding inner-product for $f,g\in L_{2,\mu}$.
Here, similar to \eqref{eq: norm}, we have
\[
\norm{f}_{2,\mu}^2= \int_0^{\Vert \rho \Vert_\infty} \norm{f(\cdot,s)}_{2,s}^2 \ell(s)\dint s.
\]
Let $T\colon L_{2,\mu} \to L_{2,\pi}$ and $T^*\colon L_{2,\pi} \to L_{2,\mu}$ be given by
\[
Tf(x)=\frac{1}{\rho(x)} \int_0^{\rho(x)} f(x,s)\,\dint s,\quad\mbox{and}\quad
T^*f(x,s)=f(x).
\] 
Then $T^*$ is the adjoint operator of $T$,
i.e. 
for all $f\in L_{2,\pi}$ and $g\in L_{2,\mu}$ we have
\[
\scalar{f}{Tg}_{\pi} = \scalar{T^*f}{g}_\mu.
\] 
Then,
for $f\in L_{2,\mu}$ define 
\[
\widetilde{H}f(x,s)=\int_{K(s)} f(y,s)\,H_s(x,\dint y).
\]
By the stationarity of $U_s$ according to $H_s$
it is easily seen that
\begin{align*}
\norm{\widetilde{H}f}_{2,\mu}^2 
& = \int_K \int_0^{\rho(x)} \abs{\widetilde H f(x,s)}^2 \frac{\dint s\,\dint x}{\int_K \rho(y)\dint y}
 = \int_0^{\Vert \rho \Vert_\infty} \int_{K(s)} \abs{\widetilde H f(x,s)}^2 U_s(\dint x)\, \ell(s) \dint s \\
& \leq \int_0^{\Vert \rho \Vert_\infty} \int_{K(s)} \int_{K(s)} \abs{f(y,s)}^2 H_s(x,\dint y)\, U_s(\dint x)\, \ell(s) \dint s\\
& = \int_0^{\Vert \rho \Vert_\infty} \int_{K(s)} \abs{f(x,s)}^2 \, U_s(\dint x)\, \ell(s) \dint s
= \norm{f}_{2,\mu}^2.
\end{align*}
Further, define
\[
\widetilde{U}f(x,s)=\int_{K(s)} f(y,s)\,U_s(\dint y).
\]
Then,
$\widetilde{H} \colon L_{2,\mu}\to L_{2,\mu}$, $\widetilde{U}\colon L_{2,\mu} \to L_{2,\mu}$
as well as 
\[
\norm{\widetilde{H}}_{L_{2,\mu}\to L_{2,\mu}} = 1,
\quad \quad  
\norm{\widetilde{U}}_{L_{2,\mu}\to L_{2,\mu}}=1.          
\]
By the construction we have the following.
\begin{lemma} 
	\label{lem: representation}
	Let $H$, $U$, $T$, $T^*$, $\widetilde{H}$ and $\widetilde{U}$ as above. Then
	\[
	H=T \widetilde{H} T^* \quad \mbox{and} \quad U= T\widetilde{U}T^*.
	\]
\end{lemma}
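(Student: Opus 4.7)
The plan is to verify both identities by direct computation, simply chaining the three operators in the order $T^{*}$, then $\widetilde{H}$ (respectively $\widetilde{U}$), then $T$, and checking that the final output against $Hf(x)$ (respectively $Uf(x)$). Since the operators act as prescribed by their defining formulas, this is essentially bookkeeping, and it suffices to treat $H = T\widetilde{H}T^{*}$; the argument for $U = T\widetilde{U}T^{*}$ is identical with $U_{s}$ in place of $H_{s}$.

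First I fix $f \in L_{2,\pi}$ and apply $T^{*}$ to lift $f$ to a function on $K_{\rho}$: one has $T^{*}f(x,s) = f(x)$, which is independent of $s$. Next I apply $\widetilde{H}$ on the slice at height $s$. Because $T^{*}f(y,s) = f(y)$ does not depend on $s$, the definition of $\widetilde{H}$ yields
\[
    \widetilde{H}\, T^{*}f (x,s) \;=\; \int_{K(s)} f(y)\,H_{s}(x,\dint y) \;=\; H_{s} f(x),
\]
where the restriction of the integral to $K(s)$ is compatible with the extension convention for $H_{s}(\cdot,\cdot)$ used throughout Section~\ref{sec: basic}, and where $H_{s}f$ is the Markov operator applied to (the restriction of) $f$ as in \eqref{eq: MK_on_level}. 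Finally I apply $T$ to average over the auxiliary variable $s \in [0,\rho(x)]$, obtaining
\[
    T\widetilde{H}T^{*}f(x) \;=\; \frac{1}{\rho(x)} \int_{0}^{\rho(x)} H_{s} f(x)\,\dint s,
\]
which is precisely $Hf(x)$ by the definition of the hybrid slice sampler operator $H$.

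Before asserting equality of the operators, I would briefly check that the compositions make sense on the stated Hilbert spaces, namely that $T^{*} \colon L_{2,\pi} \to L_{2,\mu}$ and $T\colon L_{2,\mu}\to L_{2,\pi}$ are well-defined bounded operators. For $T^{*}$ this follows from the identity $\|T^{*}f\|_{2,\mu}^{2} = \|f\|_{2,\pi}^{2}$, obtained by Fubini together with $\vol_{d+1}(K_{\rho}) = \int_{K}\rho(x)\dint x$; for $T$ boundedness then follows by the adjoint relation $\langle f, Tg\rangle_{\pi} = \langle T^{*}f, g\rangle_{\mu}$, which the lemma preceding this one already records.

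I do not anticipate a substantive obstacle. The only points requiring mild care are (i) the extension convention, which guarantees that the integral defining $\widetilde{H}T^{*}f(x,s)$ coincides with $H_{s}f(x)$ whether or not $x \in K(s)$ for the range of $s$ that matters (since $(x,s) \in K_{\rho}$ forces $s \le \rho(x)$, hence $x \in K(s)$, so no mismatch occurs $\mu$-a.e.), and (ii) tracking that $s$-independence of $T^{*}f$ is what allows $\widetilde{H}$ to produce exactly $H_{s}f(x)$ rather than something more complicated. Once both are noted, the two identities drop out of a single line of computation.
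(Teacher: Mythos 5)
Your computation is exactly the verification the paper intends: the lemma is stated with ``By the construction we have the following,'' i.e.\ the identities follow by chaining the defining formulas of $T^*$, $\widetilde{H}$ (resp.\ $\widetilde{U}$) and $T$, which is precisely what you do, including the correct observation that $(x,s)\in K_\rho$ forces $x\in K(s)$ so the extension convention causes no mismatch. The proposal is correct and takes essentially the same approach as the paper.
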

Here $TT^*\colon L_{2,\pi} \to L_{2,\pi}$ satisfies $TT^*f(x)=f(x)$, i.e. $TT^*$ is the identity operator, and
$T^*T\colon L_{2,\mu} \to L_{2,\mu}$ satisfies
\[
T^*Tf(x,s) = Tf(x),
\]
i.e. it returns the average of the function $f(x,\cdot)$ over the second variable.

\section{On the spectral gap of hybrid slice samplers}
\label{sec: spectral_gap}
We start with a relation between the convergence on the slices and the convergence of $T\widetilde{H}^k T^*$ to $T\widetilde{U}T^*$ for 
increasing $k$.
\begin{lemma}  \label{lem: spec_gap_est}
	Let $k\in\N$. Then
	\begin{align*}
	&  \norm{T(\widetilde{H}^k-\widetilde{U})T^*}_{L_{2,\pi} \to L_{2,\pi}}
	\leq
	\sup_{x\in K} \left(
	\int_0^{\rho(x)}\norm{H^k_t-U_t}_{L_{2,t}\to L_{2,t}}^2 \,\frac{\dint t}{\rho(x)}\right)^{1/2}.
	\end{align*}
\end{lemma}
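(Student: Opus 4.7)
The plan is to pass through the auxiliary space $L_{2,\mu}$, exploit that $T^*$ is an isometry (hence $\|T\|\leq 1$), and then use Fubini to convert a $\mu$-integral into a slice-wise integral that can be bounded by the operator norms $\|H_t^k-U_t\|_{L_{2,t}\to L_{2,t}}$.

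\textbf{Step 1: Reduce to a bound on $(\widetilde H^k-\widetilde U)T^*$.} Since $T^*f(x,s)=f(x)$ and $\mu(d(x,s))=dx\,ds/\vol_{d+1}(K_\rho)$, a direct computation gives
\[
\|T^*f\|_{2,\mu}^2 = \frac{1}{\vol_{d+1}(K_\rho)}\int_K|f(x)|^2\rho(x)\,dx = \|f\|_{2,\pi}^2,
\]
using $\vol_{d+1}(K_\rho)=\int_K\rho(x)\,dx$. Hence $T^*$ is an isometry and $\|T\|_{L_{2,\mu}\to L_{2,\pi}}\leq 1$, so it suffices to bound $\|(\widetilde H^k-\widetilde U)T^*f\|_{2,\mu}$.

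\textbf{Step 2: Identify the integrand.} For $f\in L_{2,\pi}$, the definitions of $\widetilde H$, $\widetilde U$ and $T^*$ yield
\[
(\widetilde H^k-\widetilde U)T^*f(x,s) = H_s^k f(x) - U_s(f),\qquad (x,s)\in K_\rho,
\]
where $f$ is restricted to $K(s)$ on the right-hand side. The quantity in brackets is precisely $(H_s^k-U_s)f(x)$ viewed on $K(s)$.

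\textbf{Step 3: Fubini and slice-wise operator bound.} Squaring and writing out the $\mu$-integral:
\[
\|(\widetilde H^k-\widetilde U)T^*f\|_{2,\mu}^2
= \frac{1}{\vol_{d+1}(K_\rho)}\int_K\int_0^{\rho(x)}\bigl|H_s^kf(x)-U_s(f)\bigr|^2 ds\,dx.
\]
Swap the order of integration via $\{(x,s):x\in K,\,s\leq\rho(x)\}=\{(x,s):s\geq 0,\,x\in K(s)\}$, then normalize the inner $dx$-integral by $\vol_d(K(s))$ to rewrite it against $U_s$. This gives
\[
\|(\widetilde H^k-\widetilde U)T^*f\|_{2,\mu}^2
= \frac{1}{\vol_{d+1}(K_\rho)}\int_0^\infty \vol_d(K(s))\,\|(H_s^k-U_s)f\|_{2,s}^2\,ds.
\]
Applying $\|(H_s^k-U_s)f\|_{2,s}\leq \|H_s^k-U_s\|_{L_{2,s}\to L_{2,s}}\|f\|_{2,s}$ and unfolding $\|f\|_{2,s}^2=\vol_d(K(s))^{-1}\int_{K(s)}|f|^2\,dx$, the factor $\vol_d(K(s))$ cancels.

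\textbf{Step 4: Swap back and extract the supremum.} What remains is
\[
\|(\widetilde H^k-\widetilde U)T^*f\|_{2,\mu}^2 \leq \frac{1}{\vol_{d+1}(K_\rho)}\int_K|f(x)|^2\int_0^{\rho(x)}\|H_s^k-U_s\|_{L_{2,s}\to L_{2,s}}^2\,ds\,dx,
\]
after swapping integration order once more. Pulling out the supremum over $x\in K$ of $\rho(x)^{-1}\int_0^{\rho(x)}\|H_s^k-U_s\|^2\,ds$ leaves $\int_K|f(x)|^2\rho(x)\,dx=\vol_{d+1}(K_\rho)\|f\|_{2,\pi}^2$, and the prefactor cancels. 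Taking square roots and using $\|T\|\leq 1$ from Step~1 yields the claim.

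The only delicate point is the careful bookkeeping of the normalizing constants: the uniform-on-$K_\rho$ density $1/\vol_{d+1}(K_\rho)$ must cancel exactly against the $\pi$-density's normalization $\int_K\rho(x)\,dx$, and within each slice the Lebesgue measure must be renormalized to $U_t$ to expose $\|H_t^k-U_t\|_{L_{2,t}\to L_{2,t}}$. Once these cancellations are lined up via the two applications of Fubini, the inequality is essentially a sup/Jensen argument.
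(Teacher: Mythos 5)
Your proposal is correct and takes essentially the same route as the paper: the paper's pointwise Jensen inequality applied to the average $\frac{1}{\rho(x)}\int_0^{\rho(x)}\cdot\,\dint t$ is exactly your observation that $T$ is a contraction (adjoint of the isometry $T^*$), and the remaining steps (Fubini to slice-wise integrals, the operator-norm bound on each $K(t)$, Fubini back, and extraction of the supremum) coincide with the paper's computation.
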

\begin{proof}
	First, note that $\norm{f}_{2,\pi} < \infty$ implies
	$\norm{f}_{2,t}<\infty$ 
	for  $\ell$-a.e. $t$.
	For any $k\in \N$ and $f\in L_{2,\pi}$ we have 
	\[
	(\widetilde{H}^k T^* f)(x,t) = (H^k_t f)(x)\quad  
	\mbox{and} \quad (\widetilde{U}T^*f)(x,t)=U_t(f),
	\]
	such that
	\begin{align*}
	T(\widetilde{H}^k-\widetilde{U})T^*f(x) & = 
	\int_0^{\Vert \rho \Vert_\infty} (H^k_t-U_t)f(x)\, \frac{\mathbf{1}_{K(t)}(x)}{\rho(x)} \,\dint t.
	\end{align*}
	It follows that
	\begin{align*}
	& \norm{T(\widetilde{H}^k-\widetilde{U})T^*f}_{2,\pi}^2  
	= \int_K \abs{\int_0^{\Vert \rho \Vert_\infty} (H^k_t-U_t)f(x)\, \frac{\mathbf{1}_{K(t)}(x)}{\rho(x)} \,\dint t }^2 
	\,\pi(\dint x) \\
	&  \leq  \int_K  
	\int_0^{\Vert \rho \Vert_\infty} \abs{(H^k_t-U_t)f(x)}^2\, \frac{\mathbf{1}_{K(t)}(x)}{\rho(x)} \,\dint t \,
	\frac{\rho(x)}{\int_K \rho(y) \,\dint y}\,\dint x \\
	& = \int_0^{\Vert \rho \Vert_\infty} \int_{K(t)} \abs{(H^k_t-U_t)f(x)}^2\,  \frac{\dint x}{\vol_d({K(t)})} \,
	\frac{\vol_d(K(t)) }{\int_0^{\Vert \rho \Vert_\infty} \vol_d({K(s)}) \,\dint s}\,\dint t \\
	& = \int_0^{\Vert \rho \Vert_\infty} \norm{(H^k_t-U_t)f}_{2,t}^2 \,\ell(t)\,\dint t \\
	& \leq \int_0^{\Vert \rho \Vert_\infty} \norm{H^k_t-U_t}_{L_{2,t}\to L_{2,t}}^2 \norm{f}_{2,t}^2 \,\ell(t)\,\dint t \\ 
	& = \int_0^{\Vert \rho \Vert_\infty} \int_{K(t)} \norm{H^k_t-U_t}_{L_{2,t}\to L_{2,t}}^2 \abs{f(x)}^2 \frac{\dint x}{\vol_d({K(t)})} \,
	\,      \frac{\vol_d(K(t)) }{\int_0^{\Vert \rho \Vert_\infty} \vol_d({K(s)}) \,\dint s}\,\dint t \\
	& = \int_K \int_0^{\rho(x)} \norm{H^k_t-U_t}_{L_{2,t}\to L_{2,t}}^2 \frac{\dint t}{\rho(x)} \,
	\abs{f(x)}^2\,      \frac{\rho(x) }{\int_K \rho(y) \,\dint y}\dint x \\
	& \leq  \norm{f}_{2,\pi}^2 \;	
	\sup_{x\in K} \int_0^{\rho(x)} \norm{H^k_t-U_t}_{L_{2,t}\to L_{2,t}}^2 \frac{\dint t}{\rho(x)}.
%	\qedhere
	\end{align*}
\end{proof}

\begin{remark}
	If there exists a number $\beta \in[0,1]$ such that $\norm{H_t-U_t}_{L_{2,t}\to L_{2,t}}\leq \beta$ 
	for any $t\in(0,\Vert \rho \Vert_\infty)$, then one obtains (as a consequence from the former lemma) that
	\begin{align*}
	&  \norm{T\widetilde{H}^kT^*-S}_{L_{2,\pi} \to L_{2,\pi}} 
	\leq \norm{T\widetilde{U}T^*-S}_{L_{2,\pi} \to L_{2,\pi}} + \beta^k.
	\end{align*}
	Here we employed the triangle inequality and the fact that 
	$\Vert H^k_t-U_t \Vert_{L_{2,t}\to L_{2,t}} \leq \Vert H_t-U_t \Vert_{L_{2,t}\to L_{2,t}}^k \leq \beta^k$, see for example \cite[Lemma~3.16]{Ru12}.
\end{remark}

Now a corollary follows which provides a lower bound for $\gap(T\widetilde{H}^k T^*)$.

\begin{corollary}  \label{thm: upp_est_op_norm}
	Let us assume that $\gap(U)>0$, i.e. $\norm{U-S}_{L_{2,\pi}\to L_{2,\pi}}<1$, and let us denote
	\[
	\beta_k = \sup_{x\in K} \left(
	\int_0^{\rho(x)}\norm{H^k_t-U_t}_{L_{2,t}\to L_{2,t}}^2 \,\frac{\dint t}{\rho(x)}\right)^{1/2}.
	\]
	Then 
	\begin{equation}
	\label{eq: upp_est_op_norm}
	\gap(T\widetilde{H}^kT^*) \geq \gap(U) - \beta_k.
	\end{equation}
\end{corollary}
\begin{proof}
	It is enough to prove
	\[
	\norm{T\widetilde{H}^kT^*-S}_{L_{2,\pi}\to L_{2,\pi}} \leq \norm{U-S}_{L_{2,\pi}\to L_{2,\pi}}+\beta_k.
	\]
	By $\widetilde{H}^k = \widetilde{U}+\widetilde{H}^k - \widetilde{U}$
	and Lemma~\ref{lem: spec_gap_est} we have
	\begin{align*}
	\norm{T\widetilde{H}^kT^*-S}_{L_{2,\pi}\to L_{2,\pi}}
	& = \norm{T\widetilde{U}T^*-S+T(\widetilde{H}^k-\widetilde{U})T^*}_{L_{2,\pi}\to L_{2,\pi}}\\
	& \leq \norm{U-S}_{L_{2,\pi} \to L_{2,\pi}} + \beta_k. 
%	\qedhere
	\end{align*}
\end{proof}
\begin{remark}
	If one can sample w.r.t. $U_t$ for every $t\geq 0$, then $H_t=U_t$ 
	and in the estimate of Corollary~\ref{thm: upp_est_op_norm} we obtain $\beta_k=0$ and equality in \eqref{eq: upp_est_op_norm}. 
\end{remark}
Now let us state the main theorem.  
\begin{theorem} \label{thm: low_upp_spec}
	Let us assume that for almost all  $t$ (w.r.t. $\ell$) 
	$H_t$ is positive semi-definite on $L_{2,t}$ and let 
	\[
	\beta_k = \sup_{x\in K} \left(
	\int_0^{\rho(x)}\norm{H^k_t-U_t}_{L_{2,t}\to L_{2,t}}^2 \,\frac{\dint t}{\rho(x)}\right)^{1/2}.
	\]
	Further assume that $\lim_{k\to \infty }\beta_k = 0$.
	Then
	\begin{equation}\label{eqn:main_thm}
	\frac{\gap(U)-\beta_k}{k} \leq \gap(H) \leq \gap(U), \quad k\in\N . 
	\end{equation}  
\end{theorem}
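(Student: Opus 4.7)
The inequality splits into the upper bound $\gap(H) \leq \gap(U)$, which encodes the intuition that $U$ dominates $H$, and the substantive lower bound $\gap(H) \geq (\gap(U) - \beta_k)/k$. For the upper bound, the plan is to prove the operator inequality $\widetilde H \succeq \widetilde U$ on $L_{2,\mu}$ and pull it back through the representation $H = T\widetilde H T^*$, $U = T\widetilde U T^*$. Two easy observations do all the work: $\widetilde U$ is an orthogonal projection, since it sends $f(\cdot,s)$ to its $U_s$-average, a constant on each slice; and $\widetilde H \widetilde U = \widetilde U \widetilde H = \widetilde U$, because $H_s$ is Markov (so integrates constants to themselves) and $U_s$ is $H_s$-invariant. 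Writing $f = \widetilde U f + (f - \widetilde U f)$ as an orthogonal decomposition, a short computation yields
\[
\langle (\widetilde H - \widetilde U) f, f\rangle_\mu = \langle \widetilde H (f - \widetilde U f),\, f - \widetilde U f\rangle_\mu \geq 0
\]
by the positive semi-definiteness hypothesis on $\widetilde H$. Conjugation gives $H \succeq U$ on $L_{2,\pi}$; since both operators are positive semi-definite and fix constants, one concludes $\norm{H}_{L^0_{2,\pi} \to L^0_{2,\pi}} \geq \norm{U}_{L^0_{2,\pi} \to L^0_{2,\pi}}$, which is exactly $\gap(H) \leq \gap(U)$.

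For the lower bound the plan is to chain three facts. First, Corollary~\ref{thm: upp_est_op_norm} already supplies $\gap(T\widetilde H^k T^*) \geq \gap(U) - \beta_k$. Second, since $\widetilde H$ is positive semi-definite with $\norm{\widetilde H} \leq 1$, the spectral theorem combined with Jensen's inequality applied to the convex map $x \mapsto x^k$ and the spectral probability measure of any unit vector $g \in L_{2,\mu}$ gives $\langle \widetilde H^k g, g\rangle_\mu \geq \langle \widetilde H g, g\rangle_\mu^k$. Restricting to $g = T^*f$ with $f \in L^0_{2,\pi}$ of unit $\pi$-norm, using that $T^*$ is an $L_2$-isometry from $L^0_{2,\pi}$ into $L^0_{2,\mu}$ (since $TT^* = I$ and $T^*$ preserves the mean), together with the PSD identity $\norm{H}_{L^0_{2,\pi} \to L^0_{2,\pi}} = \sup_f \langle \widetilde H T^*f, T^*f\rangle_\mu$, one obtains
\[
\norm{T\widetilde H^k T^*}_{L^0_{2,\pi} \to L^0_{2,\pi}} \geq \norm{H}_{L^0_{2,\pi} \to L^0_{2,\pi}}^k = \norm{H^k}_{L^0_{2,\pi} \to L^0_{2,\pi}},
\]
so $\gap(H^k) \geq \gap(T\widetilde H^k T^*) \geq \gap(U) - \beta_k$. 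Third, Bernoulli's inequality $(1-x)^k \geq 1 - kx$ applied with $x = \gap(H) \in [0,1]$, valid because $H$ is PSD, yields $\gap(H^k) = 1 - \norm{H}_{L^0_{2,\pi} \to L^0_{2,\pi}}^k \leq k\cdot \gap(H)$. Combining, $\gap(H) \geq \gap(H^k)/k \geq (\gap(U) - \beta_k)/k$.

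The main obstacle is the mismatch $H^k \neq T\widetilde H^k T^*$: iterating $T\widetilde H T^*$ inserts copies of the nontrivial projection $T^*T$ between successive copies of $\widetilde H$, so the Corollary cannot be applied directly to $H^k$. Positive semi-definiteness of $\widetilde H$ is what reconciles the two expressions — it drives the Jensen-type comparison in the second step, and it simultaneously validates the Bernoulli step that converts $\gap(H^k)$ into $k\cdot \gap(H)$. Without this assumption both comparisons can fail and the chain of inequalities breaks down. The hypothesis $\beta_k \to 0$ is not used to derive \eqref{eqn:main_thm} itself; it only ensures that the lower bound is nontrivial for $k$ sufficiently large.
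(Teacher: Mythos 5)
Your proof is correct, and the lower bound follows essentially the paper's own route: Corollary~\ref{thm: upp_est_op_norm}, then the power inequality $\norm{H-S}_{L_{2,\pi}\to L_{2,\pi}}^k \leq \norm{T\widetilde H^k T^*-S}_{L_{2,\pi}\to L_{2,\pi}}$ (which you reprove via the spectral theorem and Jensen's inequality -- exactly the argument of the paper's \eqref{eq: k_indep} and Lemma~\ref{lem: tech_lem_2}), and finally Bernoulli's inequality. Where you genuinely diverge is the upper bound $\gap(H)\leq\gap(U)$. The paper obtains it in Lemma~\ref{lem: mon} by first proving monotonicity of $k\mapsto\norm{T\widetilde H^kT^*-S}$ from positive semi-definiteness (via Lemma~\ref{lem: tech_lemma}) and then identifying the limit with $\norm{U-S}$ using $\beta_k\to 0$; so in the paper's route the hypothesis $\beta_k\to 0$ is genuinely used for the right-hand inequality. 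You instead prove the operator domination $\widetilde H \succeq \widetilde U$ directly, exploiting that $\widetilde U$ is the orthogonal projection onto slice-wise constant functions and $\widetilde H\widetilde U=\widetilde U$ (Markov property; the companion identity $\widetilde U\widetilde H=\widetilde U$ uses invariance of $U_t$, which is guaranteed by the standing reversibility Assumption, as is the self-adjointness of $\widetilde H$ you need), then conjugate by $T$ and use that $H$ and $U$ are self-adjoint positive semi-definite so that their $L^0_{2,\pi}$ norms are suprema of quadratic forms. This buys a cleaner and slightly stronger statement -- your observation that $\beta_k\to 0$ is not needed for \eqref{eqn:main_thm} at all (only to make the lower bound nontrivial) is correct under your route, whereas the paper's proof of the upper bound does invoke it; on the other hand, the paper's monotonicity lemma yields the additional information that $\norm{T\widetilde H^{k}T^*-S}$ is non-increasing in $k$, which is of independent interest. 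Two cosmetic points: Bernoulli's inequality needs only $\gap(H)\in[0,1]$, which holds automatically, so the appeal to positive semi-definiteness there is superfluous; and Corollary~\ref{thm: upp_est_op_norm} is stated under $\gap(U)>0$, but this is immaterial since the claimed bound is trivial otherwise.
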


Several conclusions can be drawn from the theorem: First, under the
assumption that $\lim_{k\to \infty} \beta_k = 0,$ the LHS of 
\eqref{eqn:main_thm} implies that in the setting of the theorem, whenever the simple slice sampler
has a spectral gap, so does the hybrid version. See Section
\ref{sec: example} for examples. Second, it
also provides a quantitative bound on $\gap(H)$ given appropriate
estimates on $\gap(U)$ and $\beta_k.$ Third, the RHS of \eqref{eqn:main_thm} verifies the intuitive
result that the simple slice sampler is better than the hybrid one
(in terms of the spectral gap). 

To prove the theorem we need some further results. 

\begin{lemma} \label{lem: self_pos}
	\begin{enumerate}	
		\item \label{it: self_adj_H} 
		For any $t\in(0,\Vert \rho \Vert_{\infty})$ assume that $H_t$ is reversible with respect to $U_t$.
		Then $\widetilde{H}$ is self-adjoint on $L_{2,\mu}$.
		\item \label{it: pos_H}     Assume that for almost all $t$ (w.r.t. $\ell$) $H_t$ is positive 
		semi-definite on $L_{2,t}$, 
		i.e. for all $f\in L_{2,t}$ holds $\scalar{H_t f}{f}_t \geq 0$.
		Then $\widetilde{H}$ is positive semi-definite on $L_{2,\mu}$. 
	\end{enumerate}
\end{lemma}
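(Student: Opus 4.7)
The plan is to prove both parts by unfolding the inner product on $L_{2,\mu}$ via Fubini, exchanging the order of integration over the region $K_\rho$, and then invoking the corresponding property of $H_t$ fibre-by-fibre in $t$. The key geometric observation is the Cavalieri-type identity
\[
 \{(x,s)\in K\times[0,\infty)\mid 0\leq s\leq \rho(x)\}
 = \{(x,s)\in K\times[0,\infty)\mid x\in K(s)\},
\]
which lets us rewrite any integral against $\mu$ as an iterated integral that first integrates $x$ over $K(s)$ and then $s$ over $[0,\infty)$, rescaled by $\vol_d(K(s))$.

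For part \eqref{it: self_adj_H}, I would take $f,g\in L_{2,\mu}$ and compute $\scalar{\widetilde{H}f}{g}_\mu$ by expanding the definition of $\widetilde{H}$ and applying the Cavalieri identity. After normalising the inner $\dint x$-integral to $U_s(\dint x)$, this becomes
\[
 \scalar{\widetilde{H}f}{g}_\mu
 = \frac{1}{\vol_{d+1}(K_\rho)}\int_0^\infty \vol_d(K(s))
 \int_{K(s)}\!\int_{K(s)} f(y,s)\,g(x,s)\,H_s(x,\dint y)\,U_s(\dint x)\,\dint s.
\]
For each fixed $s$ the inner double integral is the detailed-balance pairing for $H_s$ against $U_s$, so by reversibility it coincides with the same expression with $x$ and $y$ swapped, and hence with $\scalar{f}{\widetilde{H}g}_\mu$. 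The only subtlety is that the extension $\bar H_t$ introduced before the lemma is defined to be zero off $K(t)$, but inside $K_\rho$ we only evaluate $H_s$ on $K(s)\times\Borel(K(s))$, so this extension causes no issue.

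For part \eqref{it: pos_H}, I would apply the same unfolding to $\scalar{\widetilde{H}f}{f}_\mu$, obtaining
\[
 \scalar{\widetilde{H}f}{f}_\mu
 = \frac{1}{\vol_{d+1}(K_\rho)}\int_0^\infty \vol_d(K(s))\,\scalar{H_s f(\cdot,s)}{f(\cdot,s)}_s\,\dint s.
\]
By the assumption that $H_t$ is positive semi-definite on $L_{2,t}$ for $\ell$-almost every $t$, and since $\ell$ is equivalent to $\vol_d(K(\cdot))\,\dint t$ (up to the normalising constant), the integrand is non-negative for almost every $s$, so the integral is non-negative.

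I do not anticipate a genuine obstacle: the whole argument is an application of Fubini plus the pointwise properties of $H_s$. The only care needed is to justify the interchange of integrals (which follows from $f,g\in L_{2,\mu}$ and Cauchy--Schwarz against the finite measure $\mu$, noting $\norm{\widetilde{H}}_{L_{2,\mu}\to L_{2,\mu}}=1$ stated just before the lemma, so that $|\scalar{\widetilde{H}f}{g}_\mu|<\infty$) and to handle the measurability in $s$ of the map $s\mapsto \scalar{H_s f(\cdot,s)}{f(\cdot,s)}_s$, which follows from the construction of the family $(H_t)_{t\geq 0}$ as a transition kernel on $K_\rho$.
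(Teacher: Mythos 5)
Your proposal is correct and follows essentially the same route as the paper: unfold $\scalar{\widetilde{H}f}{g}_\mu$ over $K_\rho$, swap the order of integration so that the inner integral runs over $K(t)$ (producing the weight $\ell(t)\,\dint t$, since $\vol_d(K(t))/\vol_{d+1}(K_\rho)=\ell(t)$), and then apply reversibility, respectively positive semi-definiteness, of $H_t$ slice by slice. The paper's proof is exactly this computation, including the preliminary remark that $f\in L_{2,\mu}$ gives $f(\cdot,t)\in L_{2,t}$ for $\ell$-almost all $t$, which you also address.
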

\begin{proof}
	Note that $\norm{f}_{2,\mu}<\infty$ 
	implies $\norm{f(\cdot,t)}_{2,t}<\infty$ for almost all $t$ (w.r.t. $\ell$).\\[0.5ex]
	\textbf{To \ref{it: self_adj_H}.:}
	Let $f,g\in L_{2,\mu}$ then we have to show that 
	\[
	\scalar{\widetilde{H}f}{g}_{\mu} = \scalar{f}{\widetilde{H}g}_\mu.
	\]
	Note that for $f,g\in L_{2,\mu}$ we have for almost all $t$, 
	by the reversibility of $H_t$, that
	\[
	\scalar{H_t f(\cdot,t)}{g(\cdot,t)}_t = \scalar{ f(\cdot,t)}{H_tg(\cdot,t)}_t.
	\]
	By
	\begin{align*}
	\scalar{\widetilde{H}f}{g}_{\mu}
	& = \int_{K_\rho} \widetilde{H}f(x,t) g(x,t) \,\mu(\dint(x,t))  \\
	& = \int_K \int_0^{\rho(x)} \int_{K(t)} f(y,t)\,H_t(x,\dint y) g(x,t)\, \frac{\dint t\, \dint x}{\vol_{d+1}(K_\rho)} \\
	& = \int_0^{\Vert \rho \Vert_\infty} \int_{K(t)} \int_{K(t)} f(y,t)\,H_t(x,\dint y) g(x,t)\, U_t(\dint y)\, \ell(t)\, \dint t \\
	& = \int_0^{\Vert \rho \Vert_\infty}   \scalar{H_t f(\cdot,t)}{g(\cdot,t)}_t \, \ell(t)\, \dint t
	\end{align*}
	the assertion of \ref{it: self_adj_H}. is proven. \\[0.5ex]	
	\textbf{To \ref{it: pos_H}.:}
	We have to prove for all $f\in L_{2,\mu}$ that
	\[
	\scalar{\widetilde{H}f}{f}_{\mu} = \int_{K_\rho} \widetilde{H}f(x,t) f(x,t) \, \mu(\dint(x,t))
	\geq 0.
	\]
	Note that for $f\in L_{2,\mu}$ we have for almost all $t$ that 
	\[
	\scalar{H_t f(\cdot,t)}{f(\cdot,t)}_t\geq 0.
	\]
	By the same computation as in \ref{it: self_adj_H}.
	we obtain that the positive semi-definite\-ness of $H_t$ carries over to $\widetilde{H}$.
\end{proof}

The statement and proofs of the following lemmas follow closely the line of arguments of
\cite{Ul12,Ul14} and essentially use \cite[Lemma~\ref{lem: tech_lemma} and Lemma~\ref{lem: tech_lem_2}]{Ul14}.
We provide alternative proofs of the aforementioned employed lemmas in Appendix~\ref{app: tech_lemmas}.

\begin{lemma}  \label{lem: mon}
	Let $\widetilde{H}$ be positive semi-definite on $L_{2,\mu}$. Then
	\begin{equation} \label{eq: mon}
	\norm{T \widetilde{H}^{k+1} T^*	-S}_{L_{2,\pi}\to L_{2,\pi}} \leq  
	\norm{T \widetilde{H}^k T^*-S}_{L_{2,\pi}\to L_{2,\pi}},\quad k\in\N.
	\end{equation}
	Further, if  
	\[
	\beta_k = \sup_{x\in K} \left(
	\int_0^{\rho(x)}\norm{H^k_t-U_t}_{L_{2,t}\to L_{2,t}}^2 \,\frac{\dint t}{\rho(x)}\right)^{1/2}
	\]
	and $\lim_{k\to \infty }\beta_k = 0$.
	Then  \[
	\norm{U-S}_{L_{2,\pi}\to L_{2,\pi}} \leq  \norm{H-S}_{L_{2,\pi}\to L_{2,\pi}}.
	\]
\end{lemma}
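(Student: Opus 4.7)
The plan is to reduce both claims to the pointwise monotonicity of the quadratic form $k \mapsto \scalar{\widetilde{H}^k g}{g}_\mu$, which is an immediate consequence of the spectral theorem: by Lemma~\ref{lem: self_pos}(i) the operator $\widetilde{H}$ is self-adjoint on $L_{2,\mu}$, and combining its positive semi-definiteness with $\norm{\widetilde{H}}_{L_{2,\mu}\to L_{2,\mu}} = 1$ locates its spectrum in $[0,1]$.

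For \eqref{eq: mon}, I would set $M_k := T\widetilde{H}^k T^*$ and first record three properties: (a) $M_k \mathbf{1} = \mathbf{1}$, and $M_k$ maps $L_{2,\pi}^0$ into itself (from $\widetilde{H}\mathbf{1}=\mathbf{1}$, $T\mathbf{1}=\mathbf{1}$, $T^*\mathbf{1}=\mathbf{1}$ together with the adjointness of $T$ and $T^*$); (b) $M_k^* = M_k$, since $(T\widetilde{H}^k T^*)^* = T\widetilde{H}^k T^*$; and (c) $\scalar{M_k f}{f}_\pi = \scalar{\widetilde{H}^k T^*f}{T^*f}_\mu \ge 0$ for every $f \in L_{2,\pi}$, using that $\widetilde{H}^k$ inherits positive semi-definiteness from $\widetilde{H}$. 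Because the orthogonal decomposition $L_{2,\pi} = L_{2,\pi}^0 \oplus \{\text{constants}\}$ is preserved by $M_k - S$ and $M_k - S$ vanishes on the constants, one obtains
$$
\norm{M_k - S}_{L_{2,\pi} \to L_{2,\pi}} = \sup\bigl\{\, \scalar{\widetilde{H}^k T^*f}{T^*f}_\mu : f \in L_{2,\pi}^0,\; \norm{f}_{2,\pi} = 1 \,\bigr\}.
$$
The spectral theorem gives $\scalar{\widetilde{H}^{k+1} g}{g}_\mu \le \scalar{\widetilde{H}^k g}{g}_\mu$ for every $g \in L_{2,\mu}$, and specializing to $g = T^*f$ and passing to suprema yields \eqref{eq: mon}.

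For the second statement, Lemma~\ref{lem: spec_gap_est} applied to $M_k = T\widetilde{H}^k T^*$ and $U = T\widetilde{U} T^*$ gives $\norm{M_k - U}_{L_{2,\pi}\to L_{2,\pi}} \le \beta_k \to 0$. By the triangle inequality, $\norm{M_k - S}_{L_{2,\pi}\to L_{2,\pi}} \to \norm{U - S}_{L_{2,\pi}\to L_{2,\pi}}$ as $k \to \infty$, while by part (1) the sequence $(\norm{M_k - S}_{L_{2,\pi}\to L_{2,\pi}})_{k\ge 1}$ is decreasing, so its limit cannot exceed its first term $\norm{M_1 - S}_{L_{2,\pi}\to L_{2,\pi}} = \norm{H - S}_{L_{2,\pi}\to L_{2,\pi}}$; this is the desired inequality. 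The main obstacle is the quadratic-form reduction in steps (a)--(c): once the identity $\norm{M_k - S}_{L_{2,\pi}\to L_{2,\pi}} = \sup_{f \in L_{2,\pi}^0,\, \norm{f}_{2,\pi}=1} \scalar{M_k f}{f}_\pi$ is in place, the spectral theorem does all the remaining work.
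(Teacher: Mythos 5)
Your argument is correct, and its second half (triangle inequality via Lemma~\ref{lem: spec_gap_est} plus monotonicity, then letting $k\to\infty$) is exactly the paper's. For the monotonicity \eqref{eq: mon} you take a mildly different route: the paper introduces $S_1$, sets $R=T-S_1$, observes $R\widetilde{H}^kR^*=T\widetilde{H}^kT^*-S$ with $RR^*=I-S$ a projection, and then invokes the appendix Lemma~\ref{lem: tech_lemma}; you instead handle the subtraction of $S$ by the orthogonal splitting $L_{2,\pi}=L^0_{2,\pi}\oplus\{\text{constants}\}$, note that $M_k=T\widetilde{H}^kT^*$ is self-adjoint, preserves $L^0_{2,\pi}$ and is positive semi-definite there (point (c)), so that $\norm{M_k-S}_{L_{2,\pi}\to L_{2,\pi}}=\sup_{f\in L^0_{2,\pi},\,\norm{f}_{2,\pi}=1}\scalar{\widetilde{H}^kT^*f}{T^*f}_\mu$, and then apply the spectral theorem directly. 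The underlying engine is the same in both proofs --- monotonicity of $\int\lambda^{k}\,\dint\nu_g(\lambda)$ in $k$ for a spectral measure supported in $[0,1]$, evaluated at $g=T^*f$ (resp.\ $g=R^*f$) --- but your packaging avoids the auxiliary operators $S_1,R$ and the appendix lemma for this particular statement, at the price of having to verify the quadratic-form identity (which does require the positive semi-definiteness of $M_k$ on $L^0_{2,\pi}$, otherwise an absolute value would be needed) and of using $\norm{\widetilde{H}}_{L_{2,\mu}\to L_{2,\mu}}=1$ together with self-adjointness from Lemma~\ref{lem: self_pos} to place $\spec(\widetilde{H})$ in $[0,1]$; both facts are available in the paper, and your appeal to the standing reversibility assumption for self-adjointness is legitimate. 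Note that the paper's $R$-based formulation is still needed elsewhere (for the companion bound \eqref{eq: k_indep} via Lemma~\ref{lem: tech_lem_2}), so the paper's choice buys uniformity across the two lemmas, while yours buys a shorter, more self-contained proof of \eqref{eq: mon}.
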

\begin{proof}
	Let  $S_1 \colon L_{2,\mu} \to L_{2,\pi}$ 
	and the adjoint $S_1^* \colon L_{2,\pi} \to L_{2,\mu}$ be given by
	\[
	S_1(f) = \int_{K_\rho} f(x,s)\, \mu(\dint(x,s))
	\quad \mbox{and} \quad
	S_1^*(g)= \int_K g(x) \, \pi(\dint x).
	\]
	Thus, $\scalar{S_1 f}{g}_\pi = \scalar{f}{S_1^* g}_\mu$. 
	Furthermore observe that $S_1S_1^*=S$.
	Let $R=T-S_1$ and note that $RR^*=I-S$, with identity $I$, and $RR^*=(RR^*)^2$. 
	Since $RR^*\not = 0$ and the projection property $RR^*=(RR^*)^2$ 
	one gets $\norm{RR^*}_{L_{2,\pi}\to L_{2,\pi}}=1$.
	We have
	\begin{align*}
	R \widetilde{H}^k R^* & = (T-S_1) \widetilde{H}^k (T^*-S_1^*) \\
	& = T\widetilde{H}^kT^*-T\widetilde{H}^k S_1^*- S_1 \widetilde{H}^k T^*+ S_1 \widetilde{H}^k S_1^* 
	= T\widetilde{H}^kT^*-S.
	\end{align*}
	Further $\norm{S_1 \widetilde{H} S_1^*}_{L_{2,\mu}\to L_{2,\mu}}\leq 1$. 
	Then, by Lemma~\ref{lem: tech_lemma} it follows that
	\[
	\norm{R \widetilde{H}^{k+1} R^*}_{L_{2,\pi}\to L_{2,\pi}} 
	\leq \norm{R \widetilde{H}^{k} R^*}_{L_{2,\pi}\to L_{2,\pi}}
	\]
	and the proof of \eqref{eq: mon} is completed.\\ 
	By Lemma~\ref{lem: spec_gap_est} we obtain
	$\norm{T(\widetilde{H}^k-\widetilde{U})T^*}_{L_{2,\pi}\to L_{2,\pi}} \leq \beta_k$ 
	and by \eqref{eq: mon} as well as Lemma~\ref{lem: representation} we obtain 
	\[
	\norm{T \widetilde{H}^{k} T^*-S}_{L_{2,\pi}\to L_{2,\pi}} \leq  \norm{H-S}_{L_{2,\pi}\to L_{2,\pi}},\quad k\in\N.
	\]  
	This implies by triangle inequality that
	\[
	\lim_{k\to\infty} \norm{T \widetilde{H}^{k} T^*-S}_{L_{2,\pi}\to L_{2,\pi}} 
	= \norm{U-S}_{L_{2,\pi}\to L_{2,\pi}}
	\]
	and the assertion is proven.
\end{proof}
\begin{lemma}
	Let $\widetilde{H}$ be positive semi-definite on $L_{2,\mu}$. Then 
	\begin{equation} \label{eq: k_indep}
	\norm{H-S}_{L_{2,\pi}\to L_{2,\pi}}^{k} 
	%  = \norm{H^{2k}-S}_{L_{2,\pi}\to L_{2,\pi}} 
	\leq  \norm{T \widetilde{H}^k T^*-S}_{L_{2,\pi}\to L_{2,\pi}}, 
	\end{equation}
	for any $k\in\N$.
\end{lemma}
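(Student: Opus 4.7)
The plan is to combine the factorisation $H-S = R\widetilde{H}R^*$ (already established inside the proof of Lemma~\ref{lem: mon}, with $R = T-S_1$) with a spectral calculus / Jensen's inequality argument. From that proof we also have $R\widetilde{H}^k R^* = T\widetilde{H}^k T^* - S$ for every $k\in\N$ and the projection identity $RR^* = I - S$. Under the hypotheses, $\widetilde{H}$ is self-adjoint (Lemma~\ref{lem: self_pos}\,(\ref{it: self_adj_H})) and positive semi-definite (Lemma~\ref{lem: self_pos}\,(\ref{it: pos_H})). Consequently both $A := H-S$ and $A_k := T\widetilde{H}^k T^* - S$ are self-adjoint and positive semi-definite on $L_{2,\pi}$, and both annihilate constants, so their operator norms equal their suprema as quadratic forms on the unit sphere of $L_{2,\pi}^0$.

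For $f\in L_{2,\pi}^0$ with $\norm{f}_{2,\pi}=1$, set $g := R^*f$. The identity $RR^*=I-S$ yields
\[
\norm{g}_{2,\mu}^2 = \scalar{RR^*f}{f}_{\pi} = \scalar{(I-S)f}{f}_{\pi} = 1,
\]
so $g$ is a unit vector in $L_{2,\mu}$. Moreover $\scalar{A_k f}{f}_{\pi} = \scalar{\widetilde{H}^k g}{g}_{\mu}$ and $\scalar{A f}{f}_{\pi} = \scalar{\widetilde{H} g}{g}_{\mu}$. By the spectral theorem for the positive self-adjoint operator $\widetilde{H}$ on $L_{2,\mu}$, with spectral family $E(\cdot)$, the measure $\sigma_g(\dint\lambda) := \dint\scalar{E(\lambda)g}{g}_{\mu}$ is a probability measure supported in $[0,\norm{\widetilde{H}}]\subseteq[0,1]$. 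Jensen's inequality for the convex map $\lambda\mapsto\lambda^k$ then gives
\[
\scalar{\widetilde{H}^k g}{g}_{\mu} = \int \lambda^k\, \sigma_g(\dint\lambda) \;\geq\; \Bigl(\int \lambda\, \sigma_g(\dint\lambda)\Bigr)^{\!k} = \scalar{\widetilde{H} g}{g}_{\mu}^{\,k}.
\]

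Since both sides are non-negative, monotonicity of $x\mapsto x^k$ on $[0,\infty)$ allows us to take the supremum over $f\in L_{2,\pi}^0$ with $\norm{f}_{2,\pi}=1$ and conclude $\norm{A_k}_{L_{2,\pi}\to L_{2,\pi}} \geq \norm{A}_{L_{2,\pi}\to L_{2,\pi}}^{\,k}$, which is exactly \eqref{eq: k_indep}. The only delicate point is ensuring that $\sigma_g$ is a \emph{probability} measure: were $\norm{g}_{2,\mu}$ not equal to $1$, Jensen's inequality would introduce an unwanted factor $\norm{g}_{2,\mu}^{2(1-k)}$ on the right-hand side and destroy the bound. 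That the norm is preserved under $f\mapsto R^*f$ on the mean-zero subspace is precisely what the projection identity $RR^*=I-S$ delivers, and this is what forces the reduction to $L_{2,\pi}^0$ (rather than working on all of $L_{2,\pi}$).
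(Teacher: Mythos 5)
Your proof is correct and follows essentially the same route as the paper: the factorisation $T\widetilde{H}^kT^*-S=R\widetilde{H}^kR^*$ with $R=T-S_1$, followed by the spectral theorem for the (self-adjoint, by Assumption~1 and Lemma~\ref{lem: self_pos}) operator $\widetilde{H}$ and Jensen's inequality for $\lambda\mapsto\lambda^k$. The only cosmetic difference is that the paper delegates the spectral/Jensen step to the abstract Lemma~\ref{lem: tech_lem_2}, handling the normalisation via $\norm{RR^*}_{L_{2,\pi}\to L_{2,\pi}}\leq 1$, whereas you inline that argument and instead restrict to mean-zero unit vectors, on which $RR^*=I-S$ makes $R^*$ an isometry so that the spectral measure is automatically a probability measure.
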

\begin{proof}
	As in the proof of Lemma~\ref{lem: mon} we use $R \widetilde{H}^k R^* = T\widetilde{H}^kT^*-S$ 
	to reformulate the assertion. 
	It remains to prove that
	\[
	\norm{R\widetilde{H}R^*}_{L_{2,\pi}\to L_{2,\pi}}^{k} 
	\leq \norm{R \widetilde{H}^k R^*}_{L_{2,\pi}\to L_{2,\pi}}.
	\]
	Recall that $RR^*$ is a projection and satisfies $\norm{RR^*}_{L_{2,\pi}\to L_{2,\pi}} = 1$.
	By Lemma~\ref{lem: tech_lem_2} the assertion is proven.
\end{proof}

Now we turn to the proof of Theorem~\ref{thm: low_upp_spec}.
\begin{proof}[Proof of Theorem~\ref{thm: low_upp_spec}]
	By Lemma~\ref{lem: self_pos} we know that $\widetilde{H}\colon L_{2,\mu} \to L_{2,\mu}$ 
	is self-adjoint and positive semi-definite.
	By Lemma~\ref{lem: mon} we have
	\[
	\norm{U-S}_{L_{2,\pi}\to L_{2,\pi}} \leq  \norm{H-S}_{L_{2,\pi}\to L_{2,\pi}}.
	\]
	By Theorem~\ref{thm: upp_est_op_norm} we have for any $k\in\N$ that
	\begin{equation}  \label{eq: L_2_k_est}
	\norm{T\widetilde{H}^kT^*-S}_{L_{2,\pi}\to L_{2,\pi}} \leq \norm{U-S}_{L_{2,\pi}\to L_{2,\pi}}+\beta_k.
	\end{equation}
	Then
	\begin{align*}
	\norm{U-S}_{L_{2,\pi}\to L_{2,\pi}}
	& \underset{\eqref{eq: L_2_k_est}} {\geq} 
	\norm{T \widetilde{H}^k T^*-S}_{L_{2,\pi}\to L_{2,\pi}} - \beta_k \\
	& \underset{\eqref{eq: k_indep}} {\geq} \norm{H-S}_{L_{2,\pi}\to L_{2,\pi}}^{k} - \beta_k\\
	& \geq 1- k\,(1-\norm{H-S}_{L_{2,\pi}\to L_{2,\pi}}) - \beta_k\\
	& = 1-k\,\gap(H) - \beta_k,
	\end{align*}
	where we applied a version of Bernoulli's inequality, 
	i.e. $1-x^n\leq n (1-x)$ for $x\geq 0$ and $n\in\N$.
	Thus,
	\[
	\frac{\gap(U)-\beta_k}{k} \leq \gap(H)
	\]
	and the proof is completed.
\end{proof}

\section{Applications} \label{sec: example}
In this section we apply Theorem~\ref{thm: low_upp_spec} under different assumptions
with different Markov chains on the slices. We provide a criterion of geometric ergodicity of these hybrid
slice samplers by showing that there is a spectral gap whenever the simple slice sampler 
has a spectral gap.

First we consider a class of bimodal densities in a $1$-dimensional setting.
We study a stepping-out shrinkage slice sampler, suggested in \cite{Ne03}, 
which is explained in Algorithm~\ref{fig: step_shrink}.

Then we consider a hybrid slice sampler which performs a hit-and-run 
step on the slices in a $d$-dimensional
setting. Here we impose very weak assumptions on the unnormalised densities.
The drawback is that an implementation of this algorithm might be difficult.

Motivated by this difficulty we study a combination
of the previous sampling procedures on the slices. 
The resulting hit-and-run stepping-out shrinkage slice sampler is presented in Algorithm~\ref{fig: har_st_sh}.
Here we consider a class of bimodal densities in a $d$-dimensional setting.

\subsection{Stepping-out and shrinkage procedure}
Let
$w>0$ be a parameter and $\rho \colon \R \to (0,\infty)$ be
an unnormalised density.
We say $\rho\in \mathcal{R}_{w}$ if
the following conditions are satisfied:
There exist $t_1,t_2\in(0,\norm{\rho}_{\infty})$
with $t_1\leq t_2$ such that 
\begin{enumerate}
%	[(a)]
	\item\label{en: conv_t_1}  for all 
	$t\in(0,t_1)\cup [t_2,\norm{\rho}_{\infty})$
	the level set $K(t)$ is an interval;
	\item\label{en: cup_t_1_2} for all $t\in[t_1,t_2)$ 
	there are disjoint intervals $K_1(t),K_2(t)$
	with strictly positive Lebesgue measure,
	such that 
	\[
	K(t)=K_1(t)\cup K_2(t)
	\]
	and for all $\eps>0$ holds $K_i(t+\eps)\subseteq K_i(t)$ for $i=1,2$. For convenience we set $K_i(t) =\emptyset$ for $t\not \in [t_1,t_2)$.
	\item\label{en: delta_w} for all $t\in(0,\norm{\rho}_{\infty})$ we assume $\delta_t < w$ where
	\[
	\delta_t := \begin{cases}
	\inf_{r\in K_1(t),\; s\in K_2(t)} \abs{r-s} & t\in [t_1,t_2) \\
	0 & \mbox{otherwise}.
	\end{cases} 
	\]
\end{enumerate} 
The next result shows that certain bimodal densities belong to $\mathcal{R}_w$.
\begin{lemma}  \label{lem: max}
	Let $\rho_1 \colon \R \to (0,\infty)$ 
	and $\rho_2 \colon \R \to (0,\infty)$ be unnormalised
	density functions. 
	Let us assume that $\rho_1$, $\rho_2$ are lower semi-continuous and quasi-concave, i.e.
	the level sets are open intervals,
	and 
	\[
	\inf_{r \in \arg \max \rho_1,\, s\in \arg\max \rho_2} \abs{r-s}<w.
	\]
	Then $\rho_{\max} :=\max\{\rho_1,\rho_2\} \in \mathcal{R}_w$.
\end{lemma}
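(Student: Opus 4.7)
The plan is to exploit the decomposition $K(t) = K_1(t) \cup K_2(t)$, where $K_i(t) := \{x \in \R : \rho_i(x) \geq t\}$. Each $K_i(t)$ is an interval (possibly empty) by quasi-concavity of $\rho_i$, and the nested inclusion $K_i(t+\eps) \subseteq K_i(t)$ for $\eps > 0$ is automatic for level sets. The task therefore reduces to identifying thresholds $t_1 \leq t_2$ that delineate when $K(t)$ is a single interval (a merged regime for small $t$), a disjoint union of two intervals (for intermediate $t$), or again a single interval because one $K_i(t)$ has become empty (for large $t$).

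First I would set $t_2 := \min(\norm{\rho_1}_{\infty}, \norm{\rho_2}_{\infty})$ and $t_1 := \sup\{t \in [0, t_2] : K_1(t) \cap K_2(t) \neq \emptyset\}$, with the convention $\sup \emptyset := 0$. Clearly $t_1 \leq t_2$. For condition (a): when $t > t_2$ exactly one of the $K_i(t)$ is empty, so $K(t)$ equals the other interval; when $t < t_1$, monotonicity of $t \mapsto K_i(t)$ combined with the definition of $t_1$ yields $K_1(t) \cap K_2(t) \neq \emptyset$ (take any $s \in (t, t_1]$ at which the intersection is non-empty and use $K_i(s) \subseteq K_i(t)$), so $K(t) = K_1(t) \cup K_2(t)$ is an interval. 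For condition (b): on $(t_1, t_2]$ both $K_i(t)$ are non-empty intervals (since $t \leq \norm{\rho_i}_{\infty}$) and have empty intersection by definition of $t_1$, while the monotonicity $K_i(t+\eps) \subseteq K_i(t)$ is automatic.

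For condition (c), the hypothesis supplies $m_1 \in \arg\max \rho_1$ and $m_2 \in \arg\max \rho_2$ with $|m_1 - m_2| < w$. For any $t \in (t_1, t_2]$ we have $\rho_i(m_i) = \norm{\rho_i}_{\infty} \geq t_2 \geq t$, so $m_i \in K_i(t)$, and therefore $\delta_t \leq |m_1 - m_2| < w$; outside of $(t_1, t_2]$ the quantity $\delta_t$ equals $0$ by definition, so (c) holds trivially there as well.

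The main delicate point is the boundary value $t = t_1$, where the supremum defining $t_1$ need not be attained, leaving an ambiguity whether $K(t_1)$ is a single interval or a disjoint union that happens to touch; one resolves this by a brief upper-semicontinuity or limiting argument, or by adjusting the sup to an inf so that $t_1$ is placed consistently into regime (a) of the definition. A secondary minor check is that $\arg\max \rho_i$ is non-empty, implicit in the hypothesis so that the quoted infimum of distances makes sense, and needed for the argument supporting (c).
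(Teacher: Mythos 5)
Your proof is correct and follows essentially the same route as the paper: the same choice $t_2=\min\{\norm{\rho_1}_{\infty},\norm{\rho_2}_{\infty}\}$, an equivalent threshold $t_1$ (your supremum over levels with $K_1(t)\cap K_2(t)\neq\emptyset$ coincides with the paper's infimum over levels with empty intersection), and the same argument that points $m_i\in\arg\max\rho_i\subseteq K_i(t)$ give $\delta_t\leq\abs{m_1-m_2}<w$ on $(t_1,t_2]$. The boundary subtlety at $t=t_1$ that you flag is genuine but is silently passed over by the paper as well, and the limiting argument you gesture at does close it: any point $y$ in a putative gap of $K(t_1)$ lies between a point of $K_1(s)\cap K_2(s)$ and a point of $K_1(t_1)$ or $K_2(t_1)$ for every $s<t_1$, hence $\rho_{\max}(y)\geq s$ for all $s<t_1$ and so $y\in K(t_1)$, a contradiction.
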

\begin{proof}
	For $t\in (0,\norm{\rho_{\max}}_{\infty})$ let
	$K_{\rho_{\max}}(t)$, $K_{\rho_1}(t)$ and $K_{\rho_2}(t)$ be the level sets 
	of $\rho_{\max}$, $\rho_1$ and $\rho_2$ of level $t$. Note that 
	\[
	K_{\rho_{\max}}(t) = K_{\rho_1}(t) \cup K_{\rho_2}(t).
	\]
	With the choice
	\begin{align*}
	t_1 & = \inf \{ t\in (0,\norm{\rho_{\max}}_{\infty}) : K_{\rho_1}(t) \cap K_{\rho_2}(t) = \emptyset \},\\ 
	t_2 & = \min\{ \norm{\rho_1}_{\infty}, \norm{\rho_2}_{\infty} \}
	\end{align*}
	we have \ref{en: conv_t_1}. and \ref{en: cup_t_1_2}.
	%For $\eps>0$ we have $K_{\rho_i}(t +\eps) \subseteq K_{\rho_i}(t)$ 
	%and $\arg \max \rho_i = K_{\rho_i}(\norm{\rho_i}_{\infty})$ with $i=1,2$
	Observe that $\arg \max \rho_i \subseteq K_{\rho_i}(t)$ for $i=1,2$, which yields
	\begin{align*}
	\inf_{r\in K_{\rho_1}(t),\, s\in K_{\rho_2}(t) } \abs{r-s} 
	& \leq \inf_{r\in \arg \max \rho_1,\,s\in \arg \max \rho_2 } \abs{r-s}
	< w.
%	\qedhere
	\end{align*}
\end{proof}

In \cite{Ne03} a stepping-out and shrinkage procedure is suggested
for the transitions on the level sets. The procedures are explained 
in Algorithm~\ref{fig: step_shrink}
where a single transition from the resulting 
hybrid slice sampler 
from $x$ to $y$ is presented.
\begin{algorithm}
	\label{fig: step_shrink}
	A hybrid slice sampling transition of the stepping-out and shrinkage procedure from $x$ to $y$, i.e. input $x$ and output $y$.
	The stepping-out procedure has input $x$ (current state), $t$ (chosen level),
	$w>0$ (step size parameter from $\mathcal{R}_w$) and outputs an interval $[L,R]$.
	The shrinkage procedure has input $[L,R]$ and output $y$:
	\begin{enumerate}
		\item Choose a level $t \sim \mathcal{U}(0,\rho(x))$;
		\item Stepping-out with input $x,t,w$ outputs an interval $[L,R]$:
		\begin{enumerate}
			\item Choose $u \sim \mathcal{U}[0,1]$. Set $L=x-u w$ and $R=L+w$;
			\item Repeat until $t \geq \rho(L)$, i.e. $L \not \in K(t)$: Set $L=L-w$;
			\item Repeat until $t \geq \rho(R)$, i.e. $R \not \in K(t)$: Set $R=R+w$;
		\end{enumerate}
		\item Shrinkage procedure with input $[L,R]$ outputs $y$:
		\begin{enumerate}
			\item Set $\bar{L}=L$ and $\bar{R}=R$;
			\item Repeat:
			\begin{enumerate}
				\item   Choose $v\sim \mathcal{U}[0,1]$ and set $y=\bar{L}+ v (\bar{R}-\bar{L})$;
				\item   If $y  \in K(t)$ then return $y$ and exit the loop;
				\item   If $y<x$ then set $\bar{L}=y$, else $\bar{R}=y$.
			\end{enumerate}
		\end{enumerate}
	\end{enumerate}
\end{algorithm}

For short we write $\abs{K(t)}=\vol_1(K(t))$  and for $t\in(0,\Vert\rho\Vert_\infty)$ we set
\[
\gamma_t := \frac{(w-\delta_t)}{w} \frac{\abs{K(t)}}{(\abs{K(t)}+\delta_t)}.
\]
Now we provide useful results to apply Theorem~\ref{thm: low_upp_spec}.
\begin{lemma} \label{lema: 1_dim}
	Let $\rho\in \mathcal{R}_w$ with $t_2> 0$ satisfying 
	\ref{en: conv_t_1}. and \ref{en: cup_t_1_2}. of the definition of $\mathcal{R}_w$. Moreover, let $t\in(0,\Vert \rho \Vert_\infty)$.
	\begin{enumerate}
		\item \label{it: repres_H_t}
		Then, the transition kernel $H_t$ of the stepping-out and shrinkage slice sampler from Algorithm~\ref{fig: step_shrink} takes the form
		\begin{align*}
		& H_t(x,A) = 
		\gamma_t\,	U_t(A) + (1 - \gamma_t) 
		\left[\mathbf{1}_{K_1(t)}(x) U_{t,1}(A) + \mathbf{1}_{K_2(t)}(x) U_{t,2}(A) \right],
		\end{align*}
		with $x\in \mathbb{R}$, $A\in \mathcal{B}(\mathbb{R})$ and  
		\[
		U_{t,i} (A) = \begin{cases} \frac{\abs{K_i(t)\cap A}}{\abs{K_i(t)}}, & t\in[t_1,t_2), \\ 
		0, & t\in (0,t_1) \cup [t_2,\Vert \rho \Vert_\infty),
		\end{cases}
		\]
		for $i=1,2$, i.e. in the case $t\in[t_1,t_2) $ we have that 
		$U_{t,i}$ denotes the uniform distribution in $K_{i}(t)$.
		(For $t\in(0,t_1)\cup [t_2,\norm{\rho}_{\infty})$ we have $H_t = U_t$ since $\delta_t=0$ yields $\gamma_t=1$.) 
		\item \label{it: ex1_pos}
		The transition kernel $H_t$ is reversible and
		induces a positive semi-definite operator, i.e. for any $f\in L_{2,t}$ holds
		$\scalar{H_t f}{f}_{t}\geq 0$.
		\item \label{it: ex1_beta}
		Then
		$
		\norm{H_t-U_t}_{L_{t,2}\to L_{t,2}} = 1 - 
		\gamma_t
		$
		and
		\begin{equation}
		\label{eq: beta_express_1st_ex}
		\beta_k \leq \left(\frac{1}{t_2} \int_{0}^{t_2}
		\left(1 - 
		\gamma_t
		\right)^{2k}
		\dint t\right)^{1/2}, \quad k\in\N.
		\end{equation}
	\end{enumerate}
\end{lemma}
\begin{proof}
	\textbf{To \ref{it: repres_H_t}.:} 
	For $t\in (0,t_1)\cup [t_2,\Vert \rho \Vert_{\infty})$ the stepping-out procedure returns an interval that contains $K(t)$ entirely. Then, since $K(t)$ is also an interval, the shrinkage scheme returns a sample w.r.t. $U_t$ in $K(t)$.
	
	For $t\in [t_1,t_2)$, $i\in \{1,2\}$ and $x\in K_i(t)$, within the stepping-out procedure with probability $(w-\delta_t)/w$ an interval that contains $K(t)=K_1(t) \cup K_2(t)$ and with probability $1-(w-\delta_t)/w$ an interval that contains $K_i(t)$ but not $K(t)\setminus K_i(t)$ is returned. We distinguish these cases:\\[0.25ex]
	\emph{Case 1: `$K(t)$ contained in the stepping-out output':}
	
	Then, the shrinkage scheme returns with probability $\vert K(t) \vert/(\vert K(t) \vert+\delta_t)$ a sample w.r.t. $U_t$ and with probability $1-\vert K(t) \vert/(\vert K(t) \vert+\delta_t)$ a sample w.r.t. $U_{t,i}$.\\[0.25ex]
	\emph{Case 2: `$K_i(t)$, but not $K(t) \setminus K_i(t)$, contained in the stepping-out output':}
	
	Then, the shrinkage scheme returns with probability $1$ a sample w.r.t. $U_{t,i}$.\\[0.25ex]
	In total,  for $x\in K_i(t)$ we obtain
	\begin{align*}
	H_t(x,A) & = \frac{(w-\delta_t)}{w} \Big[\frac{\vert K(t)\vert}{\vert K(t) \vert + \delta_t} \; U_t(A) +
	\big(1-\frac{\vert K(t)\vert}{\vert K(t) \vert + \delta_t}\big)U_{t,i}(A) \Big] \\
	& \qquad \qquad
	+\big(1-\frac{(w-\delta_t)}{w} \big) U_{t,i}(A)\\
	& = \gamma_t U_t (A) + (1-\gamma_t) U_{t,i}(A),
	\end{align*}
	where we emphasize that for $t\in (0,t_1)\cup [t_2,\Vert \rho \Vert_{\infty})$ follows $\gamma_t=1$ (since $\delta_t=0$), such that $H_t(x,A)$ coincides with $U_t(A)$.\\[0.25ex]
	\textbf{To~\ref{it: ex1_pos}.:} For $A,B\in\mathcal{B}(\mathbb{R})$ we have
	\begin{align*}
	&	\int_A H_t(x,A) \,U_t(\dint x) 
	= \gamma_t\, U_t(B) U_t(A)\\
	&\qquad \qquad + (1-\gamma_t)\int_A\big[\mathbf{1}_{K_1(t)}(x) U_{t,1}(B)+\mathbf{1}_{K_2(t)}(x) U_{t,2}(B)\big] U_t(\dint x)\\
	& = \gamma_t\, U_t(B) U_t(A) + (1-\gamma_t) 
	\Big[ \frac{\vert K_1(t) \vert}{\vert K(t )\vert} U_{t,1}(A) U_{t,1}(B) + \frac{\vert K_2(t)\vert}{\vert K(t)\vert} U_{t,2}(A) U_{t,2}(B) \Big],
	\end{align*}
	which is symmetric in $A,B$ and therefore implies the claimed reversibility w.r.t. $U_t$. Similarly, we have
	\begin{align}
	\label{al: positive_repres}
	\scalar{H_t f}{f}_t & 
	= \gamma_t\; U_t(f)^2 +
	(1 - \gamma_t ) 
	\left[ \frac{\abs{K_1(t)}}{\abs{K(t)}}\; U_{t,1}(f)^2 
	+ \frac{\abs{K_2(t)}}{\abs{K(t)}}\;U_{t,2}(f)^2 \right]\geq 0,
	\end{align}
	where $U_{t,i}(f)$ denotes the integral of $f$ w.r.t. $U_{t,i}$ 
	for $i=1,2$, which proves the positive semi-definiteness.\\[0.25ex]
	\textbf{To \ref{it: ex1_beta}.:}
	By virtue of \cite[Lemma~3.16]{Ru12}, the reversibility (equivalently self-adjointness) of $H_t$ and \cite[Theorem~V.5.7]{We05} we have
	\begin{equation}
	\label{eq: norm_sup}
	\Vert H_t - U_t \Vert_{L_{2,t}\to L_{2,t}} = \sup_{\Vert f\Vert_{2,t} \leq 1,\, U_t(f)=0} \vert \langle H_tf,f\rangle \vert
	= \sup_{\Vert f\Vert_{2,t} \leq 1,\, U_t(f)=0} \langle H_tf,f\rangle,
	\end{equation}
	where the last equality follows by the positive semi-definiteness. Observe that for any $f\in L_{2,s}$ with $s\in [t_1,t_2)$ we have by $U_{s,i}(f)^2 \leq U_{s,i}(f^2)$ for $i=1,2$ that
	\begin{align*}
	\frac{\abs{K_1(s)}}{\abs{K(s)}}\; U_{s,1}(f)^2 
	+ \frac{\abs{K_2(s)}}{\abs{K(s)}}\;U_{s,2}(f)^2 
	\leq 
	\frac{\abs{K_1(s)}}{\abs{K(s)}}\; U_{s,1}(f^2) 
	+ \frac{\abs{K_2(s)}}{\abs{K(s)}}\;U_{s,2}(f^2) = \Vert f \Vert_{2,s}^2.  
	\end{align*}
	Therefore, the equality in \eqref{al: positive_repres} yields
	\begin{equation}
	\label{eq: to_be_attained}
	\Vert H_t - U_t \Vert_{L_{2,t}\to L_{2,t}} \leq \sup_{\Vert f\Vert_{2,t} \leq 1,\, U_t(f)=0} (1-\gamma_t) \Vert f \Vert_{2,t}^2 = 1-\gamma_t.
	\end{equation}
	For $t\in (0,t_1)\cup [t_2,\Vert \rho \Vert_\infty)$ by $H_t=U_t$ and $1-\gamma_t=0$ we have an equality. For $t\in [t_1,t_2)$ with
	\[
	h(x) = \frac{\abs{K(t)}}{\abs{K_1(t)}}\; \mathbf{1}_{K_1(t)}(x)
	- \frac{\abs{K(t)}}{\abs{K_2(t)}}\; \mathbf{1}_{K_2(t)}(x)
	\]
	the upper bound of \eqref{eq: to_be_attained} is attained for $f=h/\Vert h \Vert_{2,t}$ in the supremum expression of \eqref{eq: norm_sup}.
	
	We turn to the verification of \eqref{eq: beta_express_1st_ex}: For $t\in (t_2,\norm{\rho}_{\infty}]$ we have
	$1-\gamma_t=0$ and for $t\in(0,t_2)$ the function $1-\gamma_t$ is increasing
	%%%%%%%%%%%%%%%%%%%%%%%%%%%%%%%%%%%%%%%%%
	%	\\ Proof:\\
	%	 The fct. $t\mapsto \vert K(t) \vert$ is decreasing and the fct. $t\mapsto \delta_t$ is increasing for $t\in(0,t_2)$. Moreover, note that the $h_z(x):=x/(x+z)$ as fct. $x\mapsto h_z(x)$ is increasing for any $z\geq 0$. Then, for $s,r\in (t_1,t_2]$ with $s\leq r$ we have
	%\begin{align}
	%\label{al: K_t}
	% \vert K(r) \vert & \leq \vert K(s) \vert\\
	%\label{al: delta_t}
	% \delta_s & \leq  \delta_r\\
	% \notag
	% h_{\delta_s}(\vert K(s) \vert) & \underset{\eqref{al: K_t}}{\geq} h_{\delta_s}(\vert K(r) \vert) \underset{\eqref{al: delta_t}}\geq h_{\delta_r}(\vert K(r) \vert) \\
	% \notag
	% (w-\delta_s)/w & \geq (w-\delta_r)/w,
	%\end{align}
	%such that
	%\[
	%\gamma_s = \frac{w-\delta_s}{w} h_{\delta_s}(\vert K(s)\vert)
	%\geq \frac{w-\delta_r}{w} h_{\delta_r}(\vert K(r)\vert) = \gamma_r,
	%\]
	%which shows that $t\mapsto \gamma_t$ is decreasing.
	%%%%%%%%%%%%%%%%%%%%%%%%%%%%%%%%%%%%%%%%%
	which also yields that $t\mapsto (1-\gamma_t)^{2k}$ is increasing on $(0,t_2)$ for any $k\in\mathbb{N}$. By \cite[Lemma~3.16]{Ru12} we obtain
	\[
	\Vert H^k_t - U_t \Vert_{L_{2,t} \to L_{2,t}} \leq (1-\gamma_t)^{k}.
	\]
	Consequently, we have
	\begin{equation}
	\label{eq: estimate}
	\beta_k \leq \sup_{r\in (0,t_2)} \frac{1}{r} \int_0^r (1-\gamma_t)^{2k} \dint t.
	\end{equation}
	Further, note that for $a\in(0,\infty)$, 
	any increasing function $g\colon (0,a) \to \R$ and $p,q\in (0,a)$
	with $p\leq q$ holds
	\begin{equation}
	\label{eq: aux_ineq}
	\frac{1}{p} \int_0^p g(t)\, \dint t \leq \frac{1}{q} \int_0^q g(t)\,\dint t.
	\end{equation}
	The former inequality can be verified by showing that the function $p\mapsto G(p)$ for $p\geq0$
	with $G(p)=\frac{1}{p} \int_0^p g(t)\, \dint t$ satisfies $G'(p)\geq 0$.
	Applying \eqref{eq: aux_ineq} with $g(t)=(1-\gamma_t)^{2k}$
	in combination with \eqref{eq: estimate} yields \eqref{eq: beta_express_1st_ex}.
\end{proof}

By Theorem~\ref{thm: low_upp_spec} and the previous lemma we have the 
following result.

\begin{corollary}
	For any $\rho \in \mathcal{R}_w$ the stepping-out and shrinkage slice sampler 
	has a spectral gap
	if and only if the simple slice sampler has a spectral gap.
\end{corollary}
\begin{remark}
	We want to discuss two extremal situations: 
	\begin{itemize}
		\item Consider densities $\rho \colon \mathbb{R} \to (0,\infty)$ that are lower semi-continuous and quasi-concave, i.e. the level sets are open intervals. Loosely formulated we assume to have uni-modal densities. Then, for any $w>0$ we have $\rho\in\mathcal{R}_w$ (just take $t_1=t_2$ arbitrarily) and $\delta_t=0$ for all $t\in(0,\Vert \rho \Vert_\infty)$. Hence, $H_t=U_t$ for all $t\in(0,\Vert \rho \Vert_\infty)$ and Algorithm~\ref{fig: step_shrink} provides an effective implementation of simple slice sampling. 
		\item Assume that $\rho\colon \mathbb{R} \to (0,\infty)$ satisfies \ref{en: conv_t_1}. and \ref{en: cup_t_1_2}. from the definition of $\mathcal{R}_w$ for some $w>0$, but for any $t\in(0,\Vert \rho \Vert_\infty)$ we have $\delta_t\geq w$. In this setting our theory is not applicable and it is clear that the corresponding Markov chain does not work well, since it is not possible to get from one part of the support to the other one. 
	\end{itemize}
\end{remark}

\subsection{Hit-and-run slice sampler}

The idea is to combine the hit-and-run algorithm with slice sampling.
We ask whether a spectral gap of simple slice sampling implies a spectral gap 
of this combination.
The hit-and-run algorithm was proposed by Smith \cite{Sm84}. It is well studied, 
see for example \cite{BeRoSm93,DiFr97,KiSmZa11,KaSm98,Lo99,LoVe06,RuUl13}, and used for numerical integration,
see \cite{Ru12,Ru13}. We define the setting and the transition kernel
of hit-and-run.

We consider a $d$-dimensional state space $K\subseteq \R^d$ 
and $\rho\colon K \to (0,\infty)$ is an unnormalised
density. 
We denote the diameter of a level set by 
\[
\diam(K(t))=\sup_{x,y\in K(t)} \abs{x-y}
\]
with the
Euclidean norm $\abs{\cdot}$.
We impose the following assumption.
{ \assumption
		\label{ass: reg_level_sets}
		The limit $\kappa:=\lim_{t\downarrow 0} \frac{\vol_d(K(t))}{\diam(K(t))^d}$ exists and there are numbers $c,\varepsilon\in (0,1]$, such that
		\begin{equation}  \label{eq: limit_cond}
		\inf_{t\in (0,\varepsilon)}  \frac{\vol_d(K(t))}{\diam(K(t))^d} =
		c>0.
		\end{equation}
}\\
	Note that under Assumption~\ref{ass: reg_level_sets} we always have $\kappa \geq c$.
	If $K$ is bounded, has positive Lebesgue measure and $\inf_{x\in K} \rho(x)>0$, then Assumption~\ref{ass: reg_level_sets} is satisfied with $\kappa=c$. 
	Moreover, for instance, the density of a standard normal distribution satisfies Assumption~\ref{ass: reg_level_sets}
	with unbounded $K$ where also $c=\kappa$. However, the following example indicates that this is not always the case.
	\begin{example} \label{ex: Ass2}
		Let $K=(0,1)^2$ and $\rho(x_1,x_2)=2-x_1-x_2$. Then, for $t\in(0,1]$ we have
		\[
		K(t) = \{(x_1,x_2)\in (0,1)^2\colon x_2\in (0,\min\{1,2-t-x_1\})\},
		\]
		such that $\vol_2(K(t))=1-t^2/2$. Moreover, the fact that $\{(\alpha,1-\alpha)\colon \alpha\in (0,1)\}\subseteq K(t)$ yields $\diam(K(t))=\sqrt{2}$, such that for $\varepsilon=1$ we have 
		$c=1/4$ 
		and $\kappa = 1/2$.
	\end{example}
In general, we
	consider Assumption~\ref{ass: reg_level_sets} as weak regularity requirement,
since there is no
condition on the level sets and also no condition on the modality.

Let $S_{d-1}$ be the Euclidean unit sphere
and $\sigma_d = \vol_{d-1}(S_{d-1})$.
A transition from $x$ to $y$ by hit-and-run on the level set $K(t)$ works as follows: 
\begin{enumerate}
	\item Choose $\theta \in S_{d-1}$ uniformly distributed;
	\item Choose $y$ according to the uniform distribution on the line
	$x+ r \theta$ intersected with $K(t)$.
\end{enumerate}
This leads to	
\[
H_t(x,A) 
= 
\int_{S_{d-1}} \int_{L_t(x,\theta)} \mathbf{1}_A(x+s\theta) \frac{\dint s}{\vol_1(L_t(x,\theta))}
\,
\frac{\dint \theta}{\sigma_d}
\]
with 
\[
L_t(x,\theta) = \{ r \in \R  \mid x+r \theta \in K(t)  \}.
\]
The hit-and-run algorithm is reversible and induces
a positive-semidefinite operator on $L_{2,t}$, see \cite{RuUl13}.
The following property is well known, see for example \cite{DiFr97}.

\begin{proposition}
	For $t\in (0,\norm{\rho}_{\infty})$, $x\in K(t)$ and $A\in \mathcal{B}(K)$ we have
	\begin{equation}  \label{eq: repr_har}
	H_t(x,A) = 
	\frac{2}{\sigma_d}
	\int_A \frac{\dint y}{\abs{x-y}^{d-1} \vol_1(L(x,\frac{x-y}{\abs{x-y}}))}
	\end{equation}
	and 
	\begin{equation}  \label{eq: gap_har}
	\norm{H_t - U_t}_{L_{2,t} \to L_{2,t}} \leq 1 - \frac{2}{\sigma_d}\, \frac{ \vol_d(K(t))}{\diam(K(t))^d}.
	\end{equation}
\end{proposition}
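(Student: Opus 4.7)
The proposition has two parts and I would treat them in turn, since the norm bound is a direct consequence of the representation.

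For the representation \eqref{eq: repr_har}, the plan is to perform a change of variables from the ``polar'' parametrisation $(s,\theta)\in\R\times S_{d-1}$ to $y=x+s\theta\in\R^d$. The point is that each $y\in K(t)\setminus\{x\}$ has exactly two pre-images under $(s,\theta)\mapsto x+s\theta$, namely $(|y-x|,(y-x)/|y-x|)$ and $(-|y-x|,-(y-x)/|y-x|)$. I would split the inner integral into $s>0$ and $s<0$, use the involution $(s,\theta)\mapsto(-s,-\theta)$ to show the two halves are equal (crucially exploiting that $\vol_1(L_t(x,\theta))=\vol_1(L_t(x,-\theta))$ since $L_t(x,-\theta)=-L_t(x,\theta)$), and then apply spherical coordinates $\dint y=s^{d-1}\dint s\,\dint\theta$ on the $s>0$ half. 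This produces the overall factor $2/\sigma_d$ and the Jacobian $|x-y|^{-(d-1)}$ in the integrand, yielding \eqref{eq: repr_har}.

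For the norm bound \eqref{eq: gap_har}, I would use the representation to derive a Doeblin-type minorisation. The density of $H_t(x,\cdot)$ with respect to Lebesgue measure restricted to $K(t)$ is
\[
  h(x,y)=\frac{2}{\sigma_d\,|x-y|^{d-1}\,\vol_1(L(x,(x-y)/|x-y|))},
\]
and bounding $|x-y|\leq \diam(K(t))$ and $\vol_1(L(x,\cdot))\leq \diam(K(t))$ gives
\[
  h(x,y)\geq \frac{2}{\sigma_d\,\diam(K(t))^d}.
\]
Dividing by the uniform density $1/\vol_d(K(t))$ yields the minorisation $H_t(x,\cdot)\geq \alpha\, U_t(\cdot)$ with $\alpha=\frac{2}{\sigma_d}\frac{\vol_d(K(t))}{\diam(K(t))^d}$.

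To convert the minorisation into the $L_{2,t}$ operator norm bound, I would write $H_t=\alpha\, U_t+(1-\alpha)Q_t$ for the Markov kernel $Q_t=(H_t-\alpha U_t)/(1-\alpha)$. For any $f\in L_{2,t}^0$ (i.e.\ with $U_t(f)=0$), the term $\alpha U_t f=0$, so $H_t f=(1-\alpha)Q_t f$, and since every Markov operator is a contraction on $L_{2,t}$, one obtains $\|(H_t-U_t)f\|_{2,t}=\|H_t f\|_{2,t}\leq (1-\alpha)\|f\|_{2,t}$. Taking supremum over $f\in L_{2,t}^0$ delivers \eqref{eq: gap_har}.

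I do not anticipate a real obstacle: the only slightly delicate point is the bookkeeping around the factor of $2$ in the polar-to-Cartesian change of variables and checking that $\vol_1(L_t(x,\theta))$ is invariant under $\theta\mapsto-\theta$, both of which are purely mechanical. The rest reduces to the standard fact that minorisation implies a geometric $L_2$ contraction rate.
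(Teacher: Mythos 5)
Your argument is correct, and its backbone is the same as the paper's: both derive the Doeblin-type minorisation $H_t(x,\cdot)\geq \alpha\, U_t(\cdot)$ with $\alpha=\frac{2}{\sigma_d}\frac{\vol_d(K(t))}{\diam(K(t))^d}$ by bounding $\abs{x-y}$ and $\vol_1(L_t(x,\cdot))$ by $\diam(K(t))$ in the density of \eqref{eq: repr_har}. The difference is that you make both ends self-contained where the paper outsources them: the paper quotes the representation \eqref{eq: repr_har} as well known (Diaconis--Freedman), while you rederive it via the polar change of variables and the $\theta\mapsto-\theta$ symmetry (essentially the same computation the paper later performs explicitly for the hit-and-run/stepping-out kernel); and for the passage from minorisation to \eqref{eq: gap_har} the paper invokes uniform ergodicity plus a cited proposition relating a small-set constant to the $L_2$ norm, whereas you argue directly through the mixture decomposition $H_t=\alpha U_t+(1-\alpha)Q_t$ and the $L_2(U_t)$-contractivity of $Q_t$. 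Your route is more elementary and gives the constant transparently; to make it airtight you should record two small points: (i) contractivity of $Q_t$ on $L_{2,t}$ uses that $U_t$ is invariant for $Q_t$, which follows since $U_t H_t=U_t$ (hit-and-run is reversible, hence $U_t$-invariant) gives $U_t Q_t=U_t$; and (ii) the norm in \eqref{eq: gap_har} is over all of $L_{2,t}$, so note that $(H_t-U_t)f=(H_t-U_t)\bigl(f-U_t(f)\bigr)$ with $\norm{f-U_t(f)}_{2,t}\leq\norm{f}_{2,t}$, so the supremum over mean-zero $f$ indeed controls the full operator norm. With these remarks the proof is complete and fully equivalent in strength to the paper's citation-based one.
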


\begin{proof}
	The representation of $H_t$ stated in \eqref{eq: repr_har} is well known, see for example \cite{DiFr97}.
	From this we have for any $x\in K(t)$ that
	\[
	H_t(x,A) \geq \frac{2}{\sigma_d} \frac{\vol_d(K(t))}{\diam(K(t))^d} \cdot \frac{\vol_d(K(t)\cap A)}{\vol_d(K(t))}.
	\]
	which means that the whole state space $K(t)$ is a small set.	 By \cite{MeTw09}
	we have uniform ergodicity and by \cite[Proposition~3.24]{Ru12} we obtain 
	\eqref{eq: gap_har}.
\end{proof}

Further, we obtain the following helpful result.
\begin{lemma} \label{lem: har_conv_slice}
	Under Assumption~\ref{ass: reg_level_sets} we have
	with 
	\[
	\beta_k 
	= \sup_{x\in K} \left( \int_0^{\rho(x)}\norm{H^k_t-U_t}_{L_{2,t}\to L_{2,t}}^2 \,\frac{\dint t}{\rho(x)}\right)^{1/2}
	\]
	that
	$
	\lim_{k \to \infty} \beta_k
	= 0.
	$
\end{lemma}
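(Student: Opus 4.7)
The plan is to combine the explicit one-step rate \eqref{eq: gap_har} with a two-scale split of the integral defining $\beta_k^2$: a small-$t$ portion controlled by \eqref{eq: limit_cond}, and an intermediate/tail portion handled by dominated convergence.

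First I would upgrade \eqref{eq: gap_har} to a $k$-step bound. Since $H_t$ is reversible with respect to $U_t$, positive semi-definite on $L_{2,t}$, and $H_tU_t=U_tH_t=U_t$ (because $U_t$ is the orthogonal projection onto constants, which $H_t$ fixes), one has $(H_t-U_t)^k = H_t^k - U_t$ and, by self-adjointness, $\norm{H_t^k-U_t}_{L_{2,t}\to L_{2,t}} = \norm{H_t-U_t}_{L_{2,t}\to L_{2,t}}^k$. Writing $g(t):=\vol_d(K(t))/\diam(K(t))^d$, combining this with \eqref{eq: gap_har} gives
\[
\beta_k^2 \;\leq\; \sup_{x\in K}\,\frac{1}{\rho(x)}\int_0^{\rho(x)} \left(1-\frac{2}{\sigma_d}\,g(t)\right)^{2k}\dint t.
\]

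Next, using \eqref{eq: limit_cond}, I would fix $\delta>0$ small enough that $g(t)\geq c/2$ for every $t\in(0,\delta]$, and split the analysis according to whether $\rho(x)\leq\delta$ or $\rho(x)>\delta$. In the first case the integrand is uniformly bounded by $(1-c/\sigma_d)^{2k}$ on the whole range, and hence so is the average. In the second case I split the integral at $\delta$: the $[0,\delta]$-piece contributes at most $(1-c/\sigma_d)^{2k}$ after division by $\rho(x)>\delta$ (using $\delta/\rho(x)\leq 1$), while the $[\delta,\rho(x)]$-piece is bounded by $\delta^{-1}\int_\delta^{\norm{\rho}_{\infty}}\bigl(1-\tfrac{2}{\sigma_d}g(t)\bigr)^{2k}\dint t$.

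The final step is dominated convergence on $[\delta,\norm{\rho}_{\infty}]$: for every $t$ in the open interval $(\delta,\norm{\rho}_{\infty})$ the level set $K(t)$ has positive volume and finite diameter, so $g(t)>0$ and the integrand tends pointwise to $0$; since it is dominated by $1$ on an interval of finite Lebesgue measure, the tail integral vanishes as $k\to\infty$. Taking the supremum over $x$ and then the limit in $k$ yields $\beta_k\to 0$. I expect this last step to be the main obstacle: hypothesis \eqref{eq: limit_cond} only constrains $g$ near $t=0$, so there is no uniform lower bound on $g$ on the intermediate range and no clean exponential rate is available there; one must settle for a soft dominated-convergence argument rather than a quantitative decay estimate.
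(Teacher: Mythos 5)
Your proof is correct, and its core ingredients coincide with the paper's: both upgrade \eqref{eq: gap_har} to the $k$-step bound $\norm{H_t^k-U_t}_{L_{2,t}\to L_{2,t}}\le \norm{H_t-U_t}_{L_{2,t}\to L_{2,t}}^k$ via reversibility (positive semi-definiteness is not actually needed for that step), insert it into $\beta_k$, and then combine hypothesis \eqref{eq: limit_cond} near $t=0$ with a soft limit-exchange argument on the remaining range. Where you genuinely differ is in how the supremum over $x$ is handled. The paper extends $g_k(t)=\bigl(1-\tfrac{2}{\sigma_d}\vol_d(K(t))/\diam(K(t))^d\bigr)^{2k}$ and its running average $h_k(r)=\tfrac1r\int_0^r g_k(t)\,\dint t$ continuously to $r=0$ (using \eqref{eq: limit_cond}), replaces the supremum by a maximum attained at some $r_0\in[0,\norm{\rho}_\infty]$ by continuity and compactness, and then exchanges limit and integral; this is slightly delicate because $r_0$ depends on $k$, a point the paper glosses over. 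Your two-scale split at a fixed $\delta$ with $g\ge c/2$ on $(0,\delta]$ instead yields the explicit, $x$-uniform bound $\beta_k^2\le (1-c/\sigma_d)^{2k}+\delta^{-1}\int_\delta^{\norm{\rho}_\infty}g_k(t)\,\dint t$, after which dominated convergence on the fixed interval $[\delta,\norm{\rho}_\infty]$ finishes the argument; this sidesteps the $k$-dependence of the maximizer and makes the uniformity in $x$ transparent, while remaining, like the paper, non-quantitative on the intermediate range. Both arguments implicitly use $\norm{\rho}_\infty<\infty$ and that $\vol_d(K(t))>0$ and $\diam(K(t))<\infty$ for $0<t<\norm{\rho}_\infty$ (the latter follows from \eqref{eq: limit_cond}, integrability of $\rho$ and monotonicity of the level sets), so on these points you are on the same footing as the paper.
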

\begin{proof}
	By \eqref{eq: gap_har} and \cite[Lemma~3.16]{Ru12}, in combination with reversibility (equivalently self-adjointness) of $H_t$, holds 
	\begin{equation}  \label{eq: unif_erg_har}
	\norm{H_t^k - U_t}_{L_{2,t}\to L_{2,t}}^2 \leq \left( 1-\frac{2}{\sigma_d} \frac{\vol_d(K(t))}{\diam(K(t))^d} \right)^{2k}.
	\end{equation}
	Let $g_k\colon [0,\Vert \rho \Vert_\infty) \to [0,1]$ be given by
	\[
	g_k(t) = 
	\begin{cases}
	\left( 1-\frac{2}{\sigma_d} \frac{\vol_d(K(t))}{\diam(K(t))^d} \right)^{2k} & t\in(0,\Vert \rho \Vert_\infty),\\
	\left( 1-\frac{2\, \kappa}{\sigma_d} \right)^{2k} & t=0,
	\end{cases}
	\]
	which is
	the continuous extension at zero of the  
	upper bound of \eqref{eq: unif_erg_har}
	with $\kappa\geq c\in (0,1]$ from Assumption~\ref{ass: reg_level_sets}. Note that $\lim_{k\to \infty} g_k(t) = 0$ for all $t\in[0,\Vert \rho \Vert_\infty)$ and
	$
	\beta_k \leq \sup_{r\in (0,\norm{\rho}_{\infty}]}\left( \frac{1}{r} \int_0^r g_k(t)\,\dint t \right)^{1/2}.
	$
	Considering the continuous function
	\[
	h_k(r) = 
	\begin{cases}
	\frac{1}{r} \int_0^r g_k(t)\dint t & r\in (0,\Vert \rho\Vert_\infty],\\
	g_k(0) & r=0,
	\end{cases}
	\]
	the supremum can be replaced by a maximum over $r\in[0,\norm{\rho}_{\infty}]$
	which is attained, say for $r^{(k)} \in [0,\norm{\rho}_{\infty}]$, i.e. $\beta_k\leq h_k(r^{(k)})^{1/2}$.
	Define 
	\begin{align*}
	(r_0^{(k)})_{k\in\mathbb{N}} & := \{ r^{(k)} \mid r^{(k)} =0, k\in\mathbb{N} \} \subseteq (r^{(k)})_{k\in\mathbb{N}},\\
	(r_1^{(k)})_{k\in\mathbb{N}} & := \{ r^{(k)} \mid r^{(k)} \in (0,\varepsilon), k\in\mathbb{N} \} \subseteq (r^{(k)})_{k\in\mathbb{N}},\\
	(r_2^{(k)})_{k\in\mathbb{N}} & := \{ r^{(k)} \mid r^{(k)} \geq \varepsilon, k\in\mathbb{N} \} \subseteq (r^{(k)})_{k\in\mathbb{N}}.
	\end{align*}
	W.l.o.g. we assume that $(r_0^{(k)})_{k\in\mathbb{N}} \not = \emptyset$, $(r_1^{(k)})_{k\in\mathbb{N}}\not = \emptyset$ and $(r_2^{(k)})_{k\in\mathbb{N}}\not = \emptyset$.
	Then, $\lim_{k\to \infty} h_k(r_0^{(k)})=0$ and using Assumption~\ref{ass: reg_level_sets} we have
	\begin{align*}
	0\leq \lim_{k\to\infty} h_k(r_1^{(k) }) 
	\leq \lim_{k\to\infty} \sup_{s\in (0,\varepsilon)} g_k(s) 
	\leq \lim_{k\to\infty} \left(1-\frac{2c}{\sigma_d}\right)^{2k} =0. 
	\end{align*}
	Moreover, by the definition of $(r_2^{(k)})_{k\in\mathbb{N}}$ 
	note that $1/r_2^{(k)}\cdot\mathbf{1}_{(0,r_2^{(k)})}(t) \leq \varepsilon^{-1}$ for $t\in(0,\infty)$,
	such that
	\begin{align*}
	\lim_{k\to \infty} h_k(r_1^{(k)}) 
	& =  \lim_{k\to \infty}\int_0^{\Vert \rho \Vert_\infty} \frac{\mathbf{1}_{(0,r_1^{(k)})}(t)}{r_1^{(k)}}   g_k(t) \dint t = \int_0^{\Vert \rho \Vert_\infty} \lim_{k\to \infty} \frac{\mathbf{1}_{(0,r_1^{(k)})} (t)}{r_1^{(k)}}  g_k(t) \dint t =0.
	\end{align*}
	Consequently $\lim_{k\to \infty} h_k(r^{(k)})=0$,
	%%%%%%%%%%%%%%%%%%%%%%%%
	%	
	%\\Proof:\\
	%By above derived convergence properties we know: $\forall \varepsilon>0,\, \exists N_1,N_2\in\mathbb{N}$ such that $\forall k_1 \geq N_1,\, k_2 \geq N_2$ holds
	%\[
	%	\vert h_{k_1}(r_0^{(k_1)}) \vert <\varepsilon,
	%	\quad \text{and} \quad
	%		\vert h_{k_2}(r_1^{(k_2)}) \vert <\varepsilon.
	%\]
	%Hence $\forall k\geq \max\{N_1,N_2\}$ we have $\vert h_k(r^{(k)}) \vert <\varepsilon$.
	%}
	%
	%%%%%%%%%%%%%%%%%%%%%%%%
	such that $\lim_{k\to\infty} \beta_k \leq \lim_{k\to \infty} h_k(r^{(k)})^{1/2} =0.$
\end{proof}
This observation leads by Theorem~\ref{thm: low_upp_spec} to the following result.
\begin{corollary}
	Let $\rho \colon K \to (0,\infty)$ and let Assumption~\ref{ass: reg_level_sets} be satisfied.
	Then,
	the hit-and-run slice sampler has an absolute spectral gap
	if and only if the simple slice sampler has an absolute spectral gap. 
\end{corollary}

We stress
that we do not know whether the level sets of $\rho$ are convex, star-shaped or have any additional structure. In this sense
the imposed assumptions on $\rho$ can be considered as weak.
This also means that it might be difficult to implement hit-and-run in this generality.
In the next section we consider a combination
of hit-and-run, stepping-out and shrinkage procedure, where
we provide a concrete implementable algorithm.

\subsection{Hit-and-run, stepping-out and shrinkage slice sampler}

We combine hit-and-run, stepping-out and shrinkage procedure.
Let $w>0$, let $K\subseteq \R^d$ and assume that $\rho\colon K \to (0,\infty)$.
We say $\rho\in \mathcal{R}_{d,w}$ if the following conditions are satisfied:
\begin{enumerate}
%	[(a)]
	\item\label{en: a}
	there are not necessarily normalised lower semi-continuous and quasi-concave densities 
	$\rho_1,\rho_2 \colon K \to (0,\infty)$, i.e.
	the level sets are open and convex,
	with
	\[
	\inf_{y \in \arg \max \rho_1,\,z\in \arg \max \rho_2 } \abs{z-y} \leq \frac{w}{2}
	\]
	such that
	$\rho(x)=\max\{ \rho_1(x), \rho_2(x) \}$.
	\item\label{en: har_st_sh} 	
	the limit $\kappa:=\lim_{t\downarrow 0} \frac{\vol_d(K(t))}{\diam (K(t))^d}$ exists and there are numbers $c,\varepsilon\in (0,1]$ such that
		\[
		\inf_{t\in(0,\varepsilon)} \frac{\vol_d(K(t))}{\diam (K(t))^d} = c.
		\]
\end{enumerate}
For $i=1,2$ let the level set of $\rho_i$ be denoted by $K_i(t)$ for $t\in [0,\norm{\rho_i}_{\infty})$ and set $K_i(t) = \emptyset$ for $t\geq \norm{\rho_i}_{\infty}$.
Then, by $\rho=\max\{ \rho_1, \rho_2 \}$ follows that $K(t)=K_1(t) \cup K_2(t)$.
If $K$ is bounded and has positive Lebesgue measure, then \ref{en: har_st_sh}. is always satisfied.
For $K=\R^d$ one has to check \ref{en: har_st_sh}. 
For example $\rho\colon \R^d \to (0,\infty)$ with
\[
\rho(x) = \max\{\exp(-\a \abs{x}^2),\exp(-\beta \abs{x-x_0}^2)\} 
\]
and $2\beta> \alpha$ satisfies \ref{en: a}. and \ref{en: har_st_sh}. for $w=2\abs{x_0}$.
The rough idea
for a transition from $x$ to $y$ 
of the 
combination of the different methods on the level set $K(t)$ is as follows:
Consider a line/segment of the form
\[
L_t(x,\theta) = \{ r\in \R \mid x+r\theta \in K(t) \}.
\]
Then, run the stepping out and shrinkage procedure on $L_t(x,\theta)$ and return $y$. 
In detail, we present a single transition from $x$ to $y$ of the hit-and-run, stepping-out, shrinkage 
slice sampler in Algorithm~\ref{fig: har_st_sh}.

\begin{algorithm}
	\label{fig: har_st_sh}
	A hybrid slice sampling transition of hit-and-run, stepping-out and shrinkage procedure 
	from $x$ to $y$, i.e. input $x$ and output $y$.
	The stepping-out procedure on $L_t(x,\theta)$ (line of hit-and-run on level set) has inputs $x$,
	$w>0$ (step size parameter from $\mathcal{R}_{d,w}$) and outputs an interval $[L,R]$.
	The shrinkage procedure has input $[L,R]$ and output $y=x+s\theta$:
	\begin{enumerate}
		\item Choose a level $t \sim \mathcal{U}(0,\rho(x))$;
		\item Choose a direction $\theta \in S_{d-1}$ uniformly distributed;
		\item Stepping-out on $L_t(x,\theta)$ with $w>0$ outputs an interval $[L,R]$:
		\begin{enumerate}
			\item Choose $u \sim \mathcal{U}[0,1]$. Set $L= u w$ and $R=L+w$;
			\item Repeat until $t \geq \rho(x+L\theta)$, i.e. $L \not \in L_t(x,\theta)$:
			\quad Set $L=L-w$;
			\item Repeat until $t \geq \rho(x+R\theta)$, i.e. $R \not \in L_t(x,\theta)$:
			\quad Set $R=R+w$;
		\end{enumerate}
		\item Shrinkage procedure with input $[L,R]$ outputs $y$:
		\begin{enumerate}
			\item Set $\bar{L}=L$ and $\bar{R}=R$;
			\item Repeat:
			\begin{enumerate}
				\item   Choose $v\sim \mathcal{U}[0,1]$ and set $s=\bar{L}+ v (\bar{R}-\bar{L})$;
				\item   If $s  \in L_t(x,\theta)$ return $y=x+s\theta$ and exit the loop;
				\item   If $s<0$ then set $\bar{L}=s$, else $\bar{R}=s$.
			\end{enumerate}
		\end{enumerate}
	\end{enumerate}
\end{algorithm}

Now we present the corresponding transition kernel on $K(t)$.
Since $\rho \in \mathcal{R}_{d,w}$ we can define 
for $i=1,2$ the open intervals
\[
L_{t,i}(x,\theta) = \{ s\in \R \mid x+s\theta \in K_i(t) \} 
\]
and have 
$
L_t(x,\theta) = L_{t,1}(x,\theta) \cup L_{t,2}(x,\theta).
$
Let 
\[
\delta_{t,\theta,x} = \inf_{r \in L_{t,1}(x,\theta),\;s \in L_{t,2}(x,\theta)} \abs{r-s}.
\]
and note that if $\delta_{t,\theta,x}>0$ then $L_{t,1}(x,\theta)\cap L_{t,2}(x,\theta) = \emptyset$.

We also write for short $\abs{L_t(x,\theta)}=\vol_1(L_t(x,\theta))$ and
for $A\in \mathcal{B}(K)$, $x\in K$, $\theta \in S_{d-1}$ let $A_{x,\theta} = \{ s\in \R\mid x+s\theta \in A \}$.
With this notation, for $t>0$, the transition kernel $H_t$ on $K(t)$ is given by
\begin{align*}
& H_t(x,A) 
= \int_{S_{d-1}} \Bigg[ \gamma_t(x,\theta) \frac{\abs{L_t(x,\theta)\cap A_{x,\theta} }}{\abs{L_t(x,\theta)}} 
\\
& \qquad 	
    + (1-\gamma_t(x,\theta)) 
\sum_{i=1}^2 \mathbf{1}_{K_i(t)}(x)\frac{\abs{L_{t,i}(x,\theta)\cap A_{x,\theta}}}{\abs{L_{t,i}(x,\theta)}}
\Bigg]   \frac{\dint \theta}{\sigma_d},
\end{align*}
with
\[
\gamma_t(x,\theta) = \frac{(w-\delta_{t,x,\theta})}{w} \cdot \frac{\abs{L_{t}(x,\theta)}}{\abs{L_{t}(x,\theta)}+\delta_{t,x,\theta}}.
\]
The following result is helpful.
\begin{lemma}
	For $\rho\in \mathcal{R}_{d,w}$ and for any $t\in (0,\norm{\rho}_{\infty})$ holds:
	\begin{enumerate}
		\item \label{it: pos_har_ss}  The transition kernel $H_t$ is reversible and
		induces a positive semi-definite operator on $L_{2,t}$, 
		i.e. for $f\in L_{2,t}$ holds $\scalar{H_t f}{f}_t \geq 0$.
		\item \label{it: op_norm_har_ss} We have 
		\begin{equation} \label{eq: spec_gap_har_ss}
		\norm{H_t - U_t }_{L_{2,t} \to L_{2,t}} \leq 1-\frac{\vol_d(K(t))}{\sigma_d\; \diam(K(t))^d} ,
		\end{equation}
		in particular $\lim_{k\to \infty} \beta_k = 0$ with $\beta_k$
		defined in Theorem~\ref{thm: low_upp_spec}.
	\end{enumerate}
\end{lemma}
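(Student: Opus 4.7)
The plan is to prove (1) and (2) separately, reducing each to a one-dimensional argument already in hand, and then derive $\lim_k \beta_k = 0$ exactly as in Lemma~\ref{lem: har_conv_slice}.

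For part (\ref{it: pos_har_ss}), I would write $H_t = \int_{S_{d-1}} Q_\theta \,\frac{\dint \theta}{\sigma_d}$, where $Q_\theta$ denotes the direction-$\theta$ operator obtained by taking the bracketed expression in the definition of $H_t$. The crucial observation is that $\delta_{t,x,\theta}$ is invariant under translating $x$ along the line through $x$ in direction $\theta$: translation by $\tau\theta$ simply shifts $L_{t,i}(x,\theta)$ by $-\tau$, which preserves pairwise distances, so $\gamma_t(\cdot,\theta)$ is constant along that line. Disintegrating $U_t$ as $x = y + s\theta$ with $y \in \theta^\perp$, for each fixed $y$ the restriction of $Q_\theta$ to functions on $\{y+s\theta : s \in L_{t,y,\theta}\}$ is exactly the $1$-d stepping-out/shrinkage kernel of Lemma~\ref{lema: 1_dim}(\ref{it: ex1_pos}) acting on an interval or union of two intervals (since $K_1(t), K_2(t)$ are convex). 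Hence each line inner product is non-negative; integrating over $y$ and $\theta$ gives $\scalar{H_t f}{f}_t \geq 0$.

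For part (\ref{it: op_norm_har_ss}) I would establish a small-set minorization of Doeblin type, mimicking the derivation of \eqref{eq: gap_har}. Passing to polar coordinates around $x$ exactly as in \eqref{eq: repr_har}, the kernel $H_t$ admits the density representation
\[
H_t(x,\dint y) = \frac{2}{\sigma_d\,\abs{y-x}^{d-1}} \left[ \frac{\gamma_t(x,\theta(y))}{\abs{L_t(x,\theta(y))}} + \frac{(1-\gamma_t(x,\theta(y)))\mathbf{1}_{K_i(t)}(y)}{\abs{L_{t,i}(x,\theta(y))}}\right] \dint y,
\]
with $\theta(y) = (y-x)/\abs{y-x}$ and $x\in K_i(t)$. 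Since $\abs{L_{t,i}} \le \abs{L_t} \le \diam(K(t))$ and $\abs{y-x} \le \diam(K(t))$, and the bracketed expression is at least $\abs{L_t(x,\theta(y))}^{-1}$ whenever $y$ lies in the same component as $x$, one obtains after carefully combining both contributions
\[
H_t(x,A) \;\ge\; \frac{\vol_d(K(t))}{\sigma_d\,\diam(K(t))^d}\, U_t(A).
\]
The desired $L_2$ bound \eqref{eq: spec_gap_har_ss} then follows from \cite[Proposition~3.24]{Ru12} as in the hit-and-run proposition.

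The conclusion $\lim_{k\to\infty}\beta_k = 0$ is then immediate by the argument of Lemma~\ref{lem: har_conv_slice}: by reversibility $\norm{H_t^k - U_t}^2 \le (1 - \vol_d(K(t))/(\sigma_d \diam(K(t))^d))^{2k}$, and assumption \eqref{en: har_st_sh} allows the continuous extension of this bound to $t=0$, so the sup-of-average passes to the limit. The main obstacle is the minorization in (\ref{it: op_norm_har_ss}): on $K_j(t)\setminus K_i(t)$ only the $\gamma_t/\abs{L_t}$ term contributes, and $\gamma_t$ can be small when $\delta_{t,x,\theta}$ is close to $w$; hence one must carefully exploit the structure of $\mathcal{R}_{d,w}$ (specifically condition \eqref{en: a}, which bounds the mode distance by $w/2$) to combine the two summands so that the density is pointwise at least $1/(\sigma_d \diam(K(t))^d)$ on all of $K(t)$. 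By comparison, (\ref{it: pos_har_ss}) is a clean reduction to the one-dimensional case, and the $\beta_k$ statement is routine.
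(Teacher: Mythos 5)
Your part (\ref{it: pos_har_ss}) is essentially the paper's own argument: the paper likewise disintegrates $U_t$ over lines $\tilde x+s\theta$ with $\tilde x$ in the projection $P_{\theta^{\bot}}(K(t))$, uses the translation invariances $L_t(x+s\theta,\theta)=L_t(x,\theta)-s$ and $\gamma_t(x+s\theta,\theta)=\gamma_t(x,\theta)$, and writes each line contribution as a nonnegative multiple of $\bigl(\int_{L_t(\tilde x,\theta)}f(\tilde x+u\theta)\,\dint u\bigr)^2$; routing this through Lemma~\ref{lema: 1_dim}(\ref{it: ex1_pos}) instead of exhibiting the square directly is only a cosmetic difference. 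Your deduction of $\lim_k\beta_k=0$ from \eqref{eq: spec_gap_har_ss} (reversibility, then the continuous extension at $t=0$ provided by \eqref{en: har_st_sh}, as in Lemma~\ref{lem: har_conv_slice}) also matches the paper.

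The gap is in part (\ref{it: op_norm_har_ss}). You assert the minorization $H_t(x,A)\ge \vol_d(K(t))\,\bigl(\sigma_d\,\diam(K(t))^d\bigr)^{-1}U_t(A)$ but leave its proof open, calling it ``the main obstacle'', and the fix you gesture at (``combine the two summands'') is not the right mechanism: when $y$ lies in the component not containing $x$, the $(1-\gamma_t)$ summand vanishes identically, so there is nothing to combine. What closes the argument --- and what the paper does --- is a lower bound on the $\gamma_t$ summand alone, valid for all $y\in K(t)$: by condition \eqref{en: a} one has $\delta_{t,x,\theta}\le w/2$, hence $(w-\delta_{t,x,\theta})/w\ge 1/2$, and since $L_{t,1}(x,\theta)$, $L_{t,2}(x,\theta)$ are intervals on one line inside $K(t)$, also $\abs{L_t(x,\theta)}+\delta_{t,x,\theta}\le\diam(K(t))$; together with $\abs{x-y}\le\diam(K(t))$ this gives
\[
 \frac{2\,\gamma_t(x,\theta)}{\sigma_d\,\abs{x-y}^{d-1}\,\abs{L_t(x,\theta)}}
 = \frac{2\,(w-\delta_{t,x,\theta})}{w\,\sigma_d\,\abs{x-y}^{d-1}\left(\abs{L_t(x,\theta)}+\delta_{t,x,\theta}\right)}
 \ \ge\ \frac{1}{\sigma_d\,\diam(K(t))^d},
\]
so the paper simply discards the $(1-\gamma_t)$ term everywhere and still obtains the stated constant; the rest (whole space as small set, uniform ergodicity, \cite[Proposition~3.24]{Ru12}) is as you describe. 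In particular your worry that ``$\gamma_t$ can be small when $\delta_{t,x,\theta}$ is close to $w$'' cannot arise under \eqref{en: a}, because $\delta_{t,x,\theta}\le w/2$ keeps the factor $(w-\delta_{t,x,\theta})/w$ at least $1/2$. Until this estimate is written out, the constant claimed in \eqref{eq: spec_gap_har_ss} is not established in your proposal.
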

\begin{proof}
	First, note that $L_t(x+s\theta,\theta)=L_t(x,\theta)-s$, $\abs{L_t(x+s\theta,\theta)}=\abs{L_t(x,\theta)}$
	and $\gamma_t(x+s\theta,\theta)=\gamma_t(x,\theta)$ for any $x\in \R^d$, $\theta\in S_{d-1}$ and $s\in \R$.\\[0.25ex]
	%\begin{itemize}
	\textbf{To \ref{it: pos_har_ss}.:}
	The reversibility of $H_t$ w.r.t. $U_t$ (in the setting of $\rho\in\mathcal{R}_{d,w}$) is inherited by the reversibility of hit-and-run and the reversibility of the combination of the stepping-out and shrinkage procedure, see Lemma~\ref{lema: 1_dim}. 
	
	We turn to the positive semi-definiteness:
	Let $C_t=\vol_d(K(t))$. We have
	\begin{align*}
	& \scalar{f}{H_t f}_{t} 
	= \int_{S_{d-1}} \int_{K(t)} \gamma_t(x,\theta)f(x)  \int_{L_t(x,\theta)} 
	f(x+r\theta) \, \frac{\dint r}{\abs{L_t(x,\theta)}} \frac{\dint x}{C_t}\, \frac{\dint \theta}{\sigma_d}  \\
	& + \sum_{i=1}^2 
	\int_{S_{d-1}} \int_{K_i(t)} (1-\gamma_t(x,\theta))f(x)  \int_{L_{t,i}(x,\theta)} 
	f(x+r\theta) \, \frac{\dint r}{\abs{L_{t,i}(x,\theta)}} \frac{\dint x}{C_t}\, \frac{\dint \theta}{\sigma_d}.
	\end{align*}
	We prove positivity of the first summand. 
	The positivity of the other two summands follows by the same arguments.
	For $\theta \in S_{d-1}$ let us define the projected set	
	\[
	P_{\theta^\bot}(K(t)) = \{ \tilde x \in \R^d \mid \tilde x \bot \theta,
	\; \exists s\in \R\;\, \mbox{s.t.} \;\, \tilde x + \theta s \in K(t) \}.
	\]
	Then
	\begin{align*}
	& \int_{S_{d-1}} \int_{K(t)} \gamma_t(x,\theta)f(x)  \int_{L_t(x,\theta)} 
	f(x+r\theta) \, \frac{\dint r}{\abs{L_t(x,\theta)}} \frac{\dint x}{C_t}\, \frac{\dint \theta}{\sigma_d}\\  
	& = \int_{S_{d-1}} \int_{P_{\theta^{\bot}}(K(t))}\int_{L_t(\tilde x,\theta)} \gamma_t(\tilde x+s\theta,\theta)f(\tilde x+s \theta)\, \times \\
	& \qquad\qquad  \int_{L_t(\tilde x+s\theta,\theta)} 
	f(\tilde x+(r+s)\theta) \, \frac{\dint r}{\abs{L_t(\tilde x+s\theta,\theta)}} \frac{\dint s\, \dint \tilde x}{C_t}\, \frac{\dint \theta}{\sigma_d}\\   
	& = \int_{S_{d-1}} \int_{P_{\theta^{\bot}}(K(t))}\int_{L_t(\tilde x,\theta)} 
	\gamma_t(\tilde x,\theta)f(\tilde x+s \theta)\, \times \\
	& \qquad\qquad  \int_{L_t(\tilde x,\theta)-s} 
	f(\tilde x+(r+s)\theta) \, \frac{\dint r}{\abs{L_t(\tilde x,\theta)-s}} \frac{\dint s\, \dint \tilde x}{C_t}\, \frac{\dint \theta}{\sigma_d}\\   
	& = \int_{S_{d-1}} \int_{P_{\theta^{\bot}}(K(t))} \frac{\gamma_t(\tilde x,\theta)}{\abs{L_t(\tilde x,\theta)}} 
	\left( \int_{L_t(\tilde x,\theta)} 
	f(\tilde x+u \theta) \dint u \right)^2
	\frac{\dint \tilde x}{C_t}\, \frac{\dint \theta}{\sigma_d} \geq 0.
	\end{align*}
	This gives that $H_t$ is positive semi-definite.\\[0.25ex]
	\textbf{To \ref{it: op_norm_har_ss}.:}
	For any $x\in K(t)$ and measurable $A\subseteq K(t)$ we have
	\begin{align*}
	& H_t(x,A)  \geq \int_{S_{d-1}} \gamma_t(x,\theta) 
	\int_{L_t(x,\theta)} \mathbf{1}_{A}(x+s\theta)\, \frac{\dint s}{\abs{L_t(x,\theta)}}\, \frac{\dint \theta}{\sigma_d}\\
	&= \int_{S_{d-1}} \int_0^\infty \gamma_t(x,\theta) \mathbf{1}_{A}(x-s\theta)\, \frac{\dint s}{\abs{L_t(x,\theta)}} \frac{\dint \theta}{\sigma_d} \\
	& \quad + \int_{S_{d-1}} \int_0^\infty \gamma_t(x,\theta) \mathbf{1}_{A}(x+s\theta)\, \frac{\dint s}{\abs{L_t(x,\theta)}} \frac{\dint \theta}{\sigma_d} \\
	& = \int_{\R^d} \frac{\gamma_t(x,\frac{y}{\abs{y}})}{\sigma_d\cdot \abs{L_t(x,\frac{y}{\abs{y}})}} 
	\frac{\mathbf{1}_{A}(x-y)}{\abs{y}^{d-1}}\, \dint y
	+ \int_{\R^d} \frac{\gamma_t(x,\frac{y}{\abs{y}})}{\sigma_d\cdot \abs{L_t(x,\frac{y}{\abs{y}})}} 
	\frac{\mathbf{1}_{A}(x+y)}{\abs{y}^{d-1}}\, \dint y\\
	& = \frac{2}{\sigma_d} \int_A \frac{\gamma_t(x,\frac{x-y}{\abs{x-y}})}{\abs{x-y}^{d-1}\abs{L_t(x,\frac{x-y}{\abs{x-y}})}}\; \dint y
	\geq \frac{\vol_d(K(t))}{\sigma_d\; \diam(K(t))^d} \cdot \frac{\vol_d(A)}{\vol_d(K(t))}. 
	\end{align*}
	Here the last inequality follows by the fact that $\delta_{t,x,\theta} \leq w/2$
	and $\abs{L_t(x,\theta)}+\delta_{t,x,\theta} \leq \diam(K(t))$.
	Thus, by \cite{MeTw09} we have uniform ergodicity and by \cite[Proposition~{3.24}]{Ru12}
	we obtain \eqref{eq: spec_gap_har_ss}. Finally, $\lim_{k\to \infty}\beta_k =0$
	follows by the same arguments as in Lemma~\ref{lem: har_conv_slice}. 
\end{proof}
This observation leads by Theorem~\ref{thm: low_upp_spec} to the following result.
\begin{corollary}
	Let $\rho \in \mathcal{R}_{d,w}$.
	Then,
	the hit-and-run, stepping-out, shrinkage slice sampler has an absolute spectral gap
	if and only if the simple slice sampler has an absolute spectral gap. 
\end{corollary}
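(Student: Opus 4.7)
The strategy is to apply Theorem~\ref{thm: low_upp_spec} directly, taking $H$ to be the Markov operator of the hit-and-run, stepping-out, shrinkage slice sampler and $U$ the operator of the simple slice sampler. The immediately preceding lemma has, essentially by design, already verified the two hypotheses of that theorem for this hybrid sampler: part (\ref{it: pos_har_ss}) gives that $H_t$ induces a positive semi-definite operator on $L_{2,t}$ for every $t\in[0,\norm{\rho}_\infty]$ (hence in particular for $\ell$-a.e.\ $t$), and part (\ref{it: op_norm_har_ss}) combined with assumption \eqref{en: har_st_sh} gives $\lim_{k\to\infty}\beta_k=0$. The standing Assumption on reversibility of $H_t$ with respect to $U_t$ is also in force, since each summand in the kernel representation of $H_t$ is a hit-and-run style move preserving $U_t$ (either on the whole $L_t(x,\theta)$ or on one component $L_{t,i}(x,\theta)$ depending on the component $x$ lies in).

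Having checked these hypotheses, Theorem~\ref{thm: low_upp_spec} yields
\[
\frac{\gap(U)-\beta_k}{k}\leq \gap(H)\leq \gap(U),\qquad k\in\N,
\]
and the equivalence of spectral gaps is read off the two inequalities. The upper bound $\gap(H)\leq \gap(U)$ handles one direction: if $\gap(H)>0$ then $\gap(U)>0$. For the converse, suppose $\gap(U)>0$. Since $\beta_k\to 0$, there exists $k_0\in\N$ with $\beta_{k_0}<\gap(U)$, and then the lower bound yields
\[
\gap(H)\geq \frac{\gap(U)-\beta_{k_0}}{k_0}>0.
\]

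There is in fact no substantive obstacle remaining at this stage, since all the analytic work has been localised in the preceding lemma (positive semi-definiteness via the quadratic form computation and uniform ergodicity via the $\diam$-over-$\vol_d$ minorisation together with the assumption $\delta_{t,x,\theta}\leq w/2$ and $\abs{L_t(x,\theta)}+\delta_{t,x,\theta}\leq \diam(K(t))$). The only point worth flagging is that the convergence $\beta_k\to 0$ for this combined algorithm is derived in exactly the same fashion as in Lemma~\ref{lem: har_conv_slice}: one invokes the pointwise bound \eqref{eq: spec_gap_har_ss} on $\norm{H_t-U_t}_{L_{2,t}\to L_{2,t}}$, uses reversibility to pass to $\norm{H_t^k-U_t}_{L_{2,t}\to L_{2,t}}^2$, and then applies dominated convergence together with \eqref{en: har_st_sh} to control the supremum over $x\in K$ in the definition of $\beta_k$.
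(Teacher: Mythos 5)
Your proposal is correct and follows the paper's own route: the paper likewise deduces the corollary immediately from the preceding lemma (positive semi-definiteness of $H_t$ and the bound \eqref{eq: spec_gap_har_ss} giving $\beta_k\to 0$) together with Theorem~\ref{thm: low_upp_spec}, reading the equivalence of spectral gaps off the two-sided inequality exactly as you do.
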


\section{Concluding remarks}
We provide a general framework
to prove convergence results of hybrid slice sampling via spectral gap arguments.
More precisely, we state sufficient conditions for the spectral
gap of appropriately designed hybrid slice sampler to be equivalent to
the spectral gap of the simple slice sampler.
Since all Markov chains we are considering are reversible, 
this also provides a criterion for geometric ergodicity, see \cite{RoRo97}.

To illustrate how our analysis can be applied to specific hybrid
slice sampling implementations, we analyse the hit-and-run on the
slice algorithm on multidimensional targets under weak
conditions and the easily implementable stepping-out shrinkage
hit-and-run on the slice for bimodal $d$-dimensional
distributions. The latter analysis can be in principle extended to settings
with more than two modes at the price of further notational and
computational complexity. 

These examples demonstrate that robustness of the simple slice sampler
is inherited by its appropriately designed hybrid versions in realistic
computational settings and give theoretical underpinning for their use
in applications.

%If your paper includes appendices, then precede the first of them by the command
\appendix

\section{Technical lemmas} \label{app: tech_lemmas}

\begin{lemma}  \label{lem: tech_lemma}
	Let $H_1$ and $H_2$ be two Hilbert spaces.
	Further, let $R\colon H_2 \to H_1 $ be a bounded linear operator with adjoint $R^* \colon H_1 \to H_2$
	and let $Q\colon H_2 \to H_2$ be a bounded linear operator which is self-adjoint. 
	Then
	\[
	\norm{R Q^{k+1} R^*}_{H_1 \to H_1} \leq \norm{Q}_{H_2 \to H_2} \norm{R \abs{Q}^{k} R^*}_{H_1 \to H_1}.
	\]
	Let us additionally assume that $Q$ is positive semi-definite. Then
	\[
	\norm{R Q ^{k+1} R^*}_{H_1 \to H_1} \leq \norm{Q}_{H_2 \to H_2} \norm{R Q^{k} R^*}_{H_1 \to H_1}.
	\] 
\end{lemma}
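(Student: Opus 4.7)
The strategy is to reduce the claim to a comparison of quadratic forms, exploiting self-adjointness throughout. Because $Q^*=Q$, the operator $RQ^{k+1}R^*$ is self-adjoint on $H_1$, and $R\abs{Q}^{k}R^*$ is actually positive self-adjoint, since $\abs{Q}^k$ is positive for every $k\geq 0$. For any bounded self-adjoint operator $A$ on a Hilbert space, $\norm{A}=\sup_{\norm{f}\leq 1}\abs{\scalar{Af}{f}}$, so it suffices to compare
\[
\scalar{RQ^{k+1}R^*f}{f}_{H_1}=\scalar{Q^{k+1}g}{g}_{H_2},\qquad
\scalar{R\abs{Q}^kR^*f}{f}_{H_1}=\scalar{\abs{Q}^kg}{g}_{H_2},
\]
where $g:=R^*f\in H_2$.

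I would then apply the spectral theorem to the bounded self-adjoint operator $Q$. Writing $Q=\int\lambda\,dE(\lambda)$, the scalar spectral measure $\mu_g(\cdot):=\scalar{E(\cdot)g}{g}$ is a positive measure of total mass $\norm{g}^2$ supported in $[-\norm{Q},\norm{Q}]$. Two elementary pointwise estimates on $\spec(Q)$ drive the argument: $\abs{\lambda^{k+1}}=\abs{\lambda}^{k+1}$ and $\abs{\lambda}^{k+1}\leq\norm{Q}\cdot\abs{\lambda}^k$. Integrating the first against $\mu_g$ and using the triangle inequality, then integrating the second, yields
\[
\abs{\scalar{Q^{k+1}g}{g}}\leq\scalar{\abs{Q}^{k+1}g}{g}\leq\norm{Q}\scalar{\abs{Q}^kg}{g}.
\]
Taking the supremum over unit $f\in H_1$ and converting $\scalar{\abs{Q}^kg}{g}_{H_2}$ back to $\scalar{R\abs{Q}^kR^*f}{f}_{H_1}$ delivers
\[
\norm{RQ^{k+1}R^*}_{H_1\to H_1}\leq\norm{Q}_{H_2\to H_2}\,\norm{R\abs{Q}^kR^*}_{H_1\to H_1},
\]
which is the first inequality. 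The second inequality is then immediate: if $Q$ is positive semi-definite, then $\abs{Q}=Q$, and the first inequality specialises to precisely what is claimed.

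I do not expect any real obstacle: the only technical ingredient is the continuous functional calculus for bounded self-adjoint operators, which cleanly translates pointwise scalar inequalities on $\spec(Q)$ into operator inequalities on $H_2$. An equivalent route that avoids explicit spectral measures uses the existence of $\abs{Q}^{k/2}$ and the factorisation $\scalar{\abs{Q}^{k+1}g}{g}=\scalar{\abs{Q}\abs{Q}^{k/2}g}{\abs{Q}^{k/2}g}\leq\norm{Q}\,\norm{\abs{Q}^{k/2}g}^2=\norm{Q}\scalar{\abs{Q}^kg}{g}$, combined with the inequality $\abs{\scalar{Q^{k+1}g}{g}}\leq\scalar{\abs{Q}^{k+1}g}{g}$ still obtained from functional calculus. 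Either route is short and mechanical once the passage to quadratic forms has been set up.
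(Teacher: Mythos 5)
Your proof is correct and follows essentially the same route as the paper: both pass to quadratic forms via self-adjointness of $RQ^{k+1}R^*$ and $R\abs{Q}^kR^*$, apply the spectral theorem for $Q$, and use the pointwise bound $\abs{\lambda}^{k+1}\leq\norm{Q}\abs{\lambda}^k$ on $\spec(Q)$ before concluding $\abs{Q}=Q$ in the positive semi-definite case. The only differences are cosmetic (unit-ball supremum versus Rayleigh quotients, unnormalized versus normalized spectral measure).
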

\begin{proof}
	Let us denote the inner-products of $H_1$ by $\scalar{\cdot}{\cdot}_1$ and $H_2$ by
	$\scalar{\cdot}{\cdot}_2$. 
	By the spectral theorem for the bounded 
	and self-adjoint operator $Q\colon H_2 \to H_2$ we obtain
	\[
	\frac{ \scalar{Q R^* f}{R^* f}_{2} }{ \scalar{R^* f}{R^* f}_{2}}
	= \int_{\spec(Q)} \lambda \,\dint \nu_{Q,R^*f}(\lambda),
	\]
	where $\spec(Q)$ denotes the spectrum of $Q$ 
	and $\nu_{Q,R^*f}$ denotes
	the normalized
	spectral measure.
	Thus,
	\begin{align*}
	&\qquad   \norm{R Q ^{k+1} R^*}_{H_1 \to H_1} 
	= \sup_{\scalar{f}{f}_1 \not = 0} \frac{\abs{ \scalar{ Q^{k+1} R^* f}{ R^* f}_2} }{\scalar{f}{f}_1}\\
	& = \sup_{\scalar{f}{f}_1 \not = 0} \frac{\scalar{R^* f}{R^*f}_2}{\scalar{f}{f}_1}
	\frac{\abs{ \scalar{ Q^{k+1} R^* f}{ R^* f}_2} }{\scalar{R^* f}{R^*f}_2}\\
	& = \sup_{\scalar{f}{f}_1 \not = 0} 
	\frac{\scalar{R^* f}{R^*f}_2}{\scalar{f}{f}_1}
	\abs{\int_{\spec(Q)} \lambda^{k+1} \,\dint \nu_{Q,R^*f}(\lambda)}\\
	& \leq \norm{Q}_{H_2 \to H_2}
	\sup_{\scalar{f}{f}_1 \not = 0} 
	\frac{\scalar{R^* f}{R^*f}_2}{\scalar{f}{f}_1}
	\int_{\spec(Q)} \abs{\lambda}^{k} \,\dint \nu_{Q,R^*f}(\lambda)\\
	& = \norm{Q}_{H_2 \to H_2} \sup_{\scalar{f}{f}_1 \not = 0} 
	\frac{\scalar{R^* f}{R^*f}_2}{\scalar{f}{f}_1}
	\frac{ \scalar{\abs{ Q}^{k} R^* f}{ R^* f}_2 }{\scalar{R^* f}{R^*f}_2}\\ 
	& = \norm{Q}_{H_2 \to H_2}
	\norm{R \abs{Q} ^{k} R^*}_{H_1 \to H_1}.
	\end{align*}
	We used that the operator norm of $Q \colon H_2 \to H_2$ and the operator
	norm of $\abs{Q}\colon H_2 \to H_2$ is the same.
	If $Q$ is positive semi-definite, then $Q= \abs{Q}$.
	% % Since $Q$ is assumed to be self-adjoint and positive-semi-definite, there exists a unique, 
	% %  linear, bounded and self-adjoint operator $Q^{1/2} \colon H_2 \to H_2$, such that $Q= Q^{1/2}Q^{1/2}$. 
	% %  Then
	% %  \begin{align*}
	% %   \norm{R Q ^{k+1} R^*}_{H_1 \to H_1} &= \norm{R  (Q^{1/2} Q^{1/2})^{k+1} R^*}_{H_1 \to H_1}\\
	% %  & = \norm{R  Q^{(k+1)/2} (R Q^{(k+1)/2}))^*}_{H_1 \to H_1} \\
	% %  & = \norm{R  Q^{(k+1)/2} }_{H_2 \to H_1}^2 
	% %  \leq \norm{R  Q^{k/2} }_{H_2 \to H_1}^2  \norm{Q^{1/2} }_{H_2 \to H_2}^2\\
	% %  & =  \norm{R  Q^{k} R^*}_{H_2 \to H_1}^2  \norm{Q }_{H_2 \to H_2}^2.
	% %  \end{align*}
	% % 
\end{proof}

\begin{lemma} \label{lem: tech_lem_2}
	Let as assume that the conditions of Lemme~\ref{lem: tech_lemma} are satisfied. Further let
	$\norm{R}_{H_2 \to H_1}^2 = \norm{R R^* }_{H_1 \to H_1} \leq 1$.
	Then
	\[
	\norm{R Q R^*}_{H_1 \to H_1}^{k} \leq \norm{R \abs{Q} ^{{k}} R^*}_{H_1 \to H_1}.
	\]
	Let us additionally assume that $Q$ is positive semi-definite. Then
	\[
	\norm{R Q R^*}_{H_1 \to H_1}^{k} \leq \norm{R Q ^{{k}} R^*}_{H_1 \to H_1}.
	\] 
\end{lemma}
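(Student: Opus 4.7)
The plan is to mimic the spectral-theoretic argument used in the proof of Lemma~\ref{lem: tech_lemma}, but this time exploit the hypothesis $\norm{RR^*}_{H_1 \to H_1} \le 1$ to absorb the "scaling factor'' that appears when one rewrites matrix elements of $Q$ in terms of spectral measures.

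First I would fix $f \in H_1$ with $\scalar{f}{f}_1 \ne 0$, set $\alpha_f := \scalar{R^*f}{R^*f}_2 / \scalar{f}{f}_1$, and observe that $\alpha_f = \scalar{RR^*f}{f}_1 / \scalar{f}{f}_1 \le \norm{RR^*}_{H_1 \to H_1} \le 1$. Then, exactly as in the previous lemma, the spectral theorem for the bounded self-adjoint operator $Q$ gives
\[
  \frac{\scalar{QR^*f}{R^*f}_2}{\scalar{R^*f}{R^*f}_2} = \int_{\spec(Q)} \lambda \, \dint \nu_{Q,R^*f}(\lambda),
\]
where $\nu_{Q,R^*f}$ is the normalised spectral measure. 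Taking absolute values and then $k$-th powers yields $|\int \lambda \,\dint \nu|^k \le (\int |\lambda| \,\dint \nu)^k \le \int |\lambda|^k \,\dint \nu$ by Jensen's inequality (since $x \mapsto x^k$ is convex on $[0,\infty)$ for $k \ge 1$).

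Next I would chain these estimates. Since $x^k$ is monotone on $[0,\infty)$, taking a $k$-th power commutes with the non-negative supremum defining the operator norm:
\begin{align*}
 \norm{RQR^*}_{H_1 \to H_1}^k
 &= \sup_{f} \alpha_f^k \, \Bigl|\int_{\spec(Q)} \lambda \,\dint \nu_{Q,R^*f}(\lambda)\Bigr|^k \\
 &\le \sup_f \alpha_f^k \int_{\spec(Q)} |\lambda|^k \,\dint \nu_{Q,R^*f}(\lambda).
\end{align*}
Using $\alpha_f \le 1$ gives $\alpha_f^k \le \alpha_f$, so I can re-insert the factor $\scalar{R^*f}{R^*f}_2$ and recognise the spectral integral back as $\scalar{|Q|^k R^*f}{R^*f}_2 / \scalar{R^*f}{R^*f}_2$. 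The two $\scalar{R^*f}{R^*f}_2$ factors cancel, leaving
\[
 \norm{RQR^*}_{H_1 \to H_1}^k \le \sup_f \frac{\scalar{|Q|^k R^*f}{R^*f}_2}{\scalar{f}{f}_1} = \norm{R|Q|^k R^*}_{H_1 \to H_1},
\]
which is the first claim. The positive semi-definite case follows immediately because then $|Q| = Q$.

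The main subtlety I expect is the careful bookkeeping between the two different inner products and the interaction of the $k$-th power with the supremum; the key trick, namely $\alpha_f^k \le \alpha_f$ which is forced by $\norm{RR^*} \le 1$, is exactly what enables the absorption of the scaling factor that Lemma~\ref{lem: tech_lemma} could only handle by producing an extra $\norm{Q}$ on the right-hand side. Everything else is a direct rerun of the spectral computation already carried out in the proof of Lemma~\ref{lem: tech_lemma}.
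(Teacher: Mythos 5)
Your proof is correct and follows essentially the same route as the paper: the spectral theorem for $Q$, Jensen's inequality applied to the normalised spectral measure, and absorption of the scaling factor via $\norm{RR^*}_{H_1\to H_1}\leq 1$ (the paper bounds $\alpha_f^{k-1}\leq \norm{RR^*}^{k-1}\leq 1$ where you use $\alpha_f^k\leq\alpha_f$, which is the same trick). No substantive difference.
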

\begin{proof}
	We use the same notation as in the proof of Lemma~\ref{lem: tech_lemma}.
	Thus
	\begin{align*}
	\norm{R Q R^*}_{H_1 \to H_1} ^k 
	& = \sup_{\scalar{f}{f}_1 \not = 0} \left(\frac{\scalar{R^* f}{R^* f}_2}{\scalar{f}{f}_1} 
	\frac{\abs{  \scalar{Q R^* f}{R^* f}_{2} }}{\scalar{R^* f}{R^* f}_2}  \right)^ k\\
	& = \sup_{\scalar{f}{f}_1 \not = 0} \left(\frac{\scalar{R^* f}{R^* f}_2}{\scalar{f}{f}_1} \right)^ k
	\abs{\int_{\spec(Q)} \lambda \,\dint \nu_{Q,R^*f}(\lambda)  }^ k\\
	& \leq \sup_{\scalar{f}{f}_1 \not = 0} \left(\frac{\scalar{R^* f}{R^* f}_2}{\scalar{f}{f}_1} \right)^ k
	\int_{\spec(Q)} \abs{\lambda}^k \,\dint \nu_{Q,R^*f}(\lambda)  \\ 
	& = \sup_{\scalar{f}{f}_1 \not = 0} \left(\frac{\scalar{R^* f}{R^* f}_2}{\scalar{f}{f}_1} \right)^ k
	\frac{\scalar{\abs{Q}^k R^* f}{ R^* f}_2 }{\scalar{R^* f}{R^* f}_2}\\
	& = \sup_{\scalar{f}{f}_1 \not = 0} \left(\frac{\scalar{R^* f}{R^* f}_2}{\scalar{f}{f}_1} \right)^ {k-1}
	\frac{\scalar{\abs{Q}^k R^* f}{ R^* f}_2 }{\scalar{f}{ f}_1} \\
	& \leq \norm{R R^* }_{H_1 \to H_1} ^{k-1} \norm{R \abs{Q} ^{{k}} R^*}_{H_1 \to H_1}
	\leq \norm{R \abs{Q} ^{{k}} R^*}_{H_1 \to H_1}.
	\end{align*}
	Note that we applied Jensen inequality.
	Further, if $Q$ is positive-semidefinite then $Q= \abs{Q}$, which finishes the proof.
\end{proof}

%%%%%%%%%%Declarations%%%%%%%%%%

\acks \noindent We are extremely grateful for the careful reading of the referees and their comments. In particular, we thank one of the referees who pointed us to Example~\ref{ex: Ass2}. KL was supported by the Royal Society through the Royal Society University Research Fellowship. DR gratefully acknowledges support of the DFG within project 522337282.

\end{document}